\definecolor{ForestGreen}{rgb}{0.1333,0.5451,0.1333}
\definecolor{DarkRed}{rgb}{0.8,0,0}
\definecolor{Red}{rgb}{1,0,0}
\renewcommand*\backref[1]{\ifx#1\relax \else (cit. on p. #1) \fi} %http://latex.org/forum/viewtopic.php?t=3670
\newcommand{\N}{\mathbb{N}}
\newcommand{\Z}{\mathbb{Z}}
\newcommand{\I}{\mathbb{I}}
\newcommand{\F}{\mathbb{F}}
\newcommand{\adj}{\operatorname{adj}}
\newcommand{\cdeg}{\operatorname{cdeg}}
\newcommand{\diag}{\operatorname{diag}}
\newcommand{\poly}{\operatorname{poly}}
\newcommand{\matrixExponent}{{2.3729}} % 2.3728639
\declaretheorem[numberwithin=section,refname={Theorem,Theorems},Refname={Theorem,Theorems}]{theorem}
\declaretheorem[numberlike=theorem]{lemma}
\declaretheorem[numberlike=theorem]{corollary}
\declaretheorem[numberlike=theorem]{definition}
\begin{document}
	
\title{Sensitive Distance and Reachability Oracles for Large Batch Updates}

\newcommand{\email}[1]{#1}

\author{%
	\normalsize
	Jan van den Brand \thanks{KTH Royal Institute of Technology, \protect\email{janvdb@kth.se}.}
	\and 
	\normalsize
	Thatchaphol Saranurak \thanks{Toyota Technological Institute at Chicago, \protect\email{saranurak@ttic.edu}. Work mostly done while at KTH Royal Institute of Technology.}
}

\maketitle

\pagenumbering{gobble}

\begin{abstract}
	In the \emph{sensitive distance oracle} problem, there are three phases.
	We first preprocess a given \emph{directed} graph $G$ with $n$ nodes
	and integer weights from $[-W,W]$. Second, given a single batch of
	$f$ edge insertions and deletions, we update the data structure.
	Third, given a query pair of nodes $(u,v)$, return the distance from
	$u$ to $v$. In the easier problem called \emph{sensitive reachability
		oracle} problem, we only ask if there exists a directed path
	from $u$ to $v$.
	
	Our first result is a sensitive distance oracle with 
	$\tilde{O}(Wn^{\omega+(3-\omega)\mu})$
	preprocessing time, $\tilde{O}(Wn^{2-\mu}f^{2}+Wnf^{\omega})$ update
	time, and $\tilde{O}(Wn^{2-\mu}f+Wnf^{2})$ query time where the parameter
	$\mu\in[0,1]$ can be chosen. 
	The data-structure requires $O(Wn^{2+\mu} \log n)$ bits of memory.
	This is the first algorithm that can
	handle $f\ge\log n$ updates. Previous results (e.g. {[}Demetrescu
	et al. SICOMP'08; Bernstein and Karger SODA'08 and FOCS'09; Duan and
	Pettie SODA'09; Grandoni and Williams FOCS'12{]}) can handle at most
	2 updates. When $3\le f\le\log n$, the only non-trivial algorithm
	was by {[}Weimann and Yuster FOCS'10{]}. When $W=\tilde{O}(1)$, our
	algorithm simultaneously improves their preprocessing time, update
	time, and query time. In particular, when $f=\omega(1)$, their update
	and query time is $\Omega(n^{2-o(1)})$, while our update and query
	time are \emph{truly subquadratic }in $n$, i.e., ours is faster by
	a polynomial factor of $n$. To highlight the technique, ours is the
	first graph algorithm that exploits the \emph{kernel basis decomposition}
	of polynomial matrices by {[}Jeannerod and Villard J.Comp'05; Zhou,
	Labahn and Storjohann J.Comp'15{]} developed in the symbolic computation
	community. 
	
	As an easy observation from our technique, we obtain the first sensitive
	reachability oracle can handle $f\ge\log n$ updates. Our algorithm
	has $O(n^{\omega})$ preprocessing time, $O(f^{\omega})$ update time,
	and $O(f^{2})$ query time.
	This data-structure requires $O(n^2 \log n)$ bits of memory.
	Efficient sensitive reachability oracles were asked in {[}Chechik,
	Cohen, Fiat, and Kaplan SODA'17{]}.
	Our algorithm can handle any constant number of updates in constant time.
	%This answers an open question by {[}Chechik,
	%Cohen, Fiat, and Kaplan SODA'17{]}.
	Previous algorithms with constant
	update and query time can handle only at most $f\le2$ updates. Otherwise,
	there are non-trivial results for $f\le\log n$, though, with query
	time $\Omega(n)$ by adapting {[}Baswana, Choudhary and Roditty STOC'16{]}.
\end{abstract}

\newpage
\pagenumbering{arabic}

\section{Introduction}
\label{sec:introduction}

In the \emph{sensitive distance oracle} problem\footnote{In literature this setting is also referred to as \emph{distance sensitivity oracle} or \emph{emergency algorithm}.}, there are three phases. First, we preprocess a given \emph{directed}
graph $G$ with $n$ nodes, $m$ edges, and integer weights from $[-W,W]$.
Second, given a single batch of $f$ edge insertions and deletions,
we update the data structure. Third, given a pair of nodes $(u,v)$,
return the distance from $u$ to $v$. The time taken in each phase
is called \emph{preprocessing time}, \emph{update time}, and \emph{query
	time} respectively. In an easier problem called \emph{sensitive reachability
	oracle} problem,\emph{ }the setting is the same except that we only
ask if there exists a directed path from $u$ to $v$.

Although both problems are well-studied (as will be discussed below),
all existing non-trivial algorithms only handle $f\le\log n$ updates.
In contrast, in the analogous problems in undirected graphs, algorithms
that handle updates of any size are known. For example, there are
sensitive connectivity oracles \cite{PatrascuT07,DuanP10,DuanP17,HenzingerN16}
(i.e. reachability oracles in which is undirected graphs) with $\poly(n)$
preprocessing time and $\poly(f,\log n)$ update and query time
for \emph{every }$f$. For sensitive distance oracles in undirected
graphs (studied in \cite{BaswanaK13,ChechikLPR12,ChechikCFK17}),
Chechik, Langberg, Peleg, and Roditty \cite{ChechikLPR12} show that
this is possible as well if \emph{approximate} distance is allowed.
It is interesting whether there is any inherent barrier for $f\ge\log n$
in directed graphs. In this paper,
we show that there is no such barrier.

\paragraph{Sensitive distance oracle.}

All previous works on this problem handle edge \emph{deletions} only\footnote{Node deletions can be handled as well using a simple reduction.}.
The first result is in the case when $f=1$ by Demetrescu et al.~\cite{DemetrescuTCR08}\footnote{The result was published in 2008, but previously announced at SODA
	and ISAAC 2002.} since 2002. Their algorithm has $O(mn^{1.5})$ preprocessing time
and $O(1)$ update and query time. This preprocessing time is improved
by Bernstein and Karger to $\tilde{O}(n^{2}\sqrt{m})$ \cite{BernsteinK08}
and finally to $\tilde{O}(mn)$ \cite{BernsteinK09}. The $\tilde{O}(mn)$
bound is optimal up to a sub-polynomial factor unless there is a truly
subcubic algorithm for the all-pairs shortest path problem. Later,
Grandoni and Williams \cite{GrandoniW12} showed a new trade-off.
For example, they can obtain $O(Wn^{2.88})$ preprocessing time and
$O(n^{0.7})$ update and query time. 

It was stated as an open question in \cite{DemetrescuTCR08,BernsteinK08,BernsteinK09}
whether there is a non-trivial algorithm for handling more than one
deletion. Duan and Pettie \cite{DuanP09a} give an affirmative answer
for $f=2$. Their algorithm has $\mbox{poly}(n)$ preprocessing time
and $\tilde{O}(1)$ update and query time. They however stated that
``it is practically infeasible'' to extend their algorithm to handle
three deletions. Weimann and Yuster \cite{WeimannY13} later showed
an algorithm for handling up to $f\le\log n/\log\log n$ deletions. For any parameter $\mu\in[0,1]$, their algorithm has $\tilde{O}(Wn^{\omega+\mu})$
preprocessing time and $\tilde{O}(n^{2-\mu/f})$ update and query time.
This algorithm remains the only non-trivial algorithm when $f\ge3$. 

To summarize, no previous algorithm can handle $f > \log n/\log\log n$ updates.
Another drawback of all previous algorithms is that they inherently cannot handle changes of edge weights.\footnote{One restricted and inefficient approach for handling weight increases via edge deletions is as follows. For each edge $e$, we need to know all possible weights for $e$, say $w_1\le\dots\le w_k$. Then, we add multi-edges $e_1,\dots,e_k$ with the same endpoints as $e$, and the weight of $e_i$ is $w_i$. To increase the weight from $w_1$ to $w_j$, we delete \emph{all} edges $e_1,\dots,e_{j-1}$. However, all previous algorithms handle only small number deletions.}
In the setting where we allow changes of edge weights, it is quite natural to ask for algorithms that handle somewhat large batches of updates.\footnote{For example, in a road network, there usually are not too many completely blocked road sections (i.e. deleted edges), but during rush-hour there can be many roads with increased congestion (i.e. edges with increased weight).}

In this work, we show the first sensitive distance oracle that can handle \emph{any} number of updates. 
Moreover, we allow both edge insertions, deletions, and changes of weights. Ours is the fastest oracle for
handling $f\ge3$ updates when $W=\tilde{O}(1)$: 
\begin{theorem}
	\label{thm:sense st}For any parameter $\mu\in[0,1]$, there is a
	Monte Carlo data structure that works in three phases as follows: 
	\begin{enumerate}
		\item Preprocess a directed graph with $n$ nodes and integer weights from
		$[-W,W]$ in $\tilde{O}(Wn^{\omega+(3-\omega)\mu})$ time and store $O(Wn^{2+\mu} \log n)$ bits.
		\item Given updates to any set of $f$ edges, update the data structure
		in $\tilde{O}(Wn^{2-\mu}f^{2}+Wnf^{\omega})$ time and store additional $O(Wnf^2 \log n)$ bits. 
		\item Given a pair of nodes $(u,v)$, return the distance from $u$ to $v$
		or report that there is a negative cycle in $\tilde{O}(Wn^{2-\mu}f+Wnf^{2})$
		time. 
	\end{enumerate}
\end{theorem}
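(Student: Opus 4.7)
The plan is to cast the problem as polynomial-matrix inversion and apply Sherman--Morrison--Woodbury for the batch update. Let $\F$ be a sufficiently large prime field, sample i.i.d. random $r_{ij}\in\F$, and form $A(x)\in\F[x]^{n\times n}$ with $A_{ij}(x)=r_{ij}\,x^{w_{ij}+W}$ on each edge and $0$ elsewhere. Put $M(x)=I-A(x)$ and work modulo $x^{D}$ for $D=\Theta(Wn)$. Expanding $M^{-1}=\sum_{k\ge 0}A^{k}$ shows that $[M^{-1}]_{uv}$ collects the weighted walks from $u$ to $v$, and a Schwartz--Zippel / path-isolation argument gives, with high probability, that $\mathrm{dist}(u,v)$ equals the smallest exponent appearing in $[M^{-1}]_{uv}$ after removing the accumulated $+W$ shift. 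Negative cycles are flagged by an auxiliary scan comparing the coefficient supports of $A^{n}$ and $A^{n+k}$ for small $k$.

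Preprocessing would use the polynomial-matrix tools of Jeannerod--Villard and Zhou--Labahn--Storjohann to build a kernel-basis factorisation of $M^{-1}\bmod x^{D}$ that exposes a tunable trade-off: for the parameter $\mu\in[0,1]$, group the indices into $n^{\mu}$ blocks and produce per-block partial-inverse summaries so that any single entry $[M^{-1}]_{ij}$ can be reconstructed in $\tilde O(Wn^{2-\mu})$ time. Building the summaries is dominated by rectangular polynomial-matrix multiplications costing $\tilde O(Wn^{\omega+(3-\omega)\mu})$ in total, with storage $O(Wn^{2+\mu}\log n)$ bits; this is the preprocessing phase.

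For updates, write the new matrix as $M'(x)=M(x)-U(x)V(x)^{\top}$ with $U,V\in\F[x]^{n\times f}$ whose columns are sparse (one scaled standard basis vector per updated edge-endpoint). The Sherman--Morrison--Woodbury identity gives
\[
(M')^{-1}=M^{-1}+M^{-1}U\bigl(I_{f}-V^{\top}M^{-1}U\bigr)^{-1}V^{\top}M^{-1}.
\]
Because $V^{\top}M^{-1}U$ is an $f\times f$ polynomial matrix whose entries are $f^{2}$ specific entries of $M^{-1}$, the update fetches them in $\tilde O(Wn^{2-\mu}f^{2})$ total and inverts the resulting $f\times f$ polynomial matrix of degree $O(Wn)$ modulo $x^{D}$ in $\tilde O(Wnf^{\omega})$, matching the update bound and the extra $O(Wnf^{2}\log n)$ bits of storage. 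For a query $(u,v)$, the Woodbury formula collapses to $[M^{-1}]_{uv}+\alpha^{\top}K^{-1}\beta$ with $\alpha,\beta\in\F[x]^{f}$ formed from $f$ entries of row $u$ and column $v$ of $M^{-1}$: the $2f+1$ single-entry probes cost $\tilde O(Wn^{2-\mu}f)$ and the $f$-dimensional polynomial arithmetic over degree-$Wn$ polynomials costs $\tilde O(Wnf^{2})$. Finally, one reads the shortest-path distance as the smallest exponent in the resulting length-$D$ polynomial.

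The main obstacle will be realising the preprocessed representation so that \emph{both} the build cost $\tilde O(Wn^{\omega+(3-\omega)\mu})$ and the single-entry access cost $\tilde O(Wn^{2-\mu})$ actually respect the stated trade-off; a plain Sankowski-style polynomial-matrix inverse would force queries to touch an $\Omega(n^{2})$-sized structure and would lose the polynomial improvement over Weimann--Yuster. Porting the kernel-basis / partial-linearisation machinery from the symbolic-computation setting into this sharded ``random-access inverse'' regime, with column-degree profiles tuned to the Woodbury probes, is the hard algebraic step, and will also force care in recycling the randomness so that path-isolation survives all updated queries, and in integrating the separate negative-cycle detection step with the algebraic pipeline.
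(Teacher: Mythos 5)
The high-level plan---reduce to a polynomial matrix, apply a Woodbury-type update, and realise the preprocessing/query trade-off via the Zhou--Labahn--Storjohann kernel-basis machinery---is the right one and matches the paper. But your reduction to graph distances has a genuine gap, and there are secondary deviations worth flagging.

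\paragraph{The reduction via $(I-A)^{-1}$ does not give distances.}
You set $A_{ij}=r_{ij}x^{w_{ij}+W}$ and read the distance off the smallest exponent of $[(I-A)^{-1}]_{uv}=\sum_{k\ge0}[A^k]_{uv}$ ``after removing the accumulated $+W$ shift.'' The problem is that the shift is $\ell W$ where $\ell$ is the number of edges of the walk, so it varies term by term and cannot be removed from the minimal exponent. Concretely, a walk of weight $d$ and $\ell$ edges contributes degree $d+\ell W$, and $\min_{\text{walks}}(d+\ell W)$ need not identify $\min_{\text{walks}}d$. Example: a $4$-node graph with a direct edge $1\to4$ of weight $0$ and a path $1\to2\to3\to4$ with each edge of weight $-1$, taking $W=2$: the direct edge gives exponent $2$, the length-$3$ path gives exponent $3$, so the minimal exponent is $2$, yet the distance is $-3$, and neither $2-W$ nor $2-3W$ recovers $-3$. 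The paper avoids this by working with the adjoint of the matrix $A$ from \Cref{lem:shortestDistanceReduction}, which has \emph{diagonal} entries $A_{ii}=X^W$. Because $\adj(A)_{i,j}$ is an $(n-1)\times(n-1)$ determinant, every nonzero term uses exactly $n-1$ rows/columns: a path $i\to j$ of $\ell$ edges is padded by $(n-1-\ell)$ self-loops of degree $W$, so every term has degree $(\text{path weight})+(n-1)W$ plus non-negative cycle contributions. This \emph{normalises} the walk-length shift to a constant $(n-1)W$, which is exactly why ``smallest degree minus $W(n-1)$'' is the distance. Your $\sum_k A^k$ mixes walks of all lengths and has no such normalisation; it also has a convergence issue whenever some $w_{ij}=-W$ (degree-$0$ entries).

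\paragraph{Adjoint Woodbury vs.\ truncated-inverse Woodbury.}
You apply the ordinary Sherman--Morrison--Woodbury identity and work modulo $x^D$. The paper instead uses \Cref{lem:adjointUpdate}, an adjoint version of Woodbury, so that every intermediate object ($\adj(A)$, $\det(A)$, $\adj(M)$, $\det(M)$) is a genuine polynomial of degree $O(Wn)$. This sidesteps the invertibility subtleties of $(I_f-V^\top M^{-1}U)$ over $\F[x]/(x^D)$ and makes all degree bounds explicit. The adjoint oracle in \Cref{thm:extendedKernelBasePreprocessing} is formulated for $\adj(B)$ of an arbitrary non-singular polynomial matrix $B$, not for $(I-A)^{-1}\bmod x^D$; to use it you need to be in the adjoint world from the start.

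\paragraph{Negative cycles.}
Your ``auxiliary scan comparing coefficient supports of $A^n$ and $A^{n+k}$'' is vague and not established. The paper's \Cref{lem:shortestDistanceReduction} gives a clean criterion: w.h.p.\ $G$ has a negative cycle iff $\det(A)$ has a monomial of degree less than $Wn$. Since $\det(A+UV^\top)=\det(M)/\det(A)^{f-1}$ is computed anyway during the update (see \Cref{thm:emergencyInverseElement}), negative-cycle detection is then an $O(1)$-time check at query time.

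In short: keep the kernel-basis trade-off and the Woodbury-style batch update, but switch from truncated $(I-A)^{-1}$ to the adjoint of the degree-normalised matrix of \Cref{lem:shortestDistanceReduction}, and use the adjoint form of the Woodbury identity so that the degree bookkeeping and the negative-cycle test go through.
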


The algorithm is Monte Carlo randomized and is correct with high probability,
i.e. the algorithm may return a larger distance with probability at
most $1/n^{c}$ for any constant $c$. 
We note that, any algorithm that can handle edge deletions can also handle node deletions by using a standard reduction. 

Let us compare this result to the algorithm by Weimann and Yuster
\cite{WeimannY13}. First, instead of edge deletions only, we allow
\emph{both} edge insertions, deletions, and changes of weight. 
The second point is efficiency. For any $\mu\in[0,1]$, we improve the
preprocessing time from $\tilde{O}(Wn^{\omega+\mu})$ to $\tilde{O}(Wn^{\omega+(3-\omega)\mu})$.
Both of our update and query time are bounded above by $\tilde{O}(Wn^{2-\mu}f^{\omega})$.
When $W=\tilde{O}(1)$, this improves their bound of $\tilde{O}(n^{2-\mu/f})$
for any $f\ge2$ by a polynomial factor. In particular, when $\mu>0$
and $f=\omega(1)$, their bound is $\Omega(n^{2-o(1)})$, while ours
is\emph{ truly subquadratic }in $n$. The last and most conceptually
important point is that we remove the constraint that $f\le\log n/\log\log n$.
While their technique is inherently for small $f$, our approach is
different and can handle any number of updates.

To highlight our technique, this is the first graph algorithm that
exploits the powerful \emph{kernel basis decomposition} of polynomial
matrices by Jeannerod and Villard \cite{JeannerodV05} and Zhou, Labahn
and Storjohann \cite{ZhouLS15} from the symbolic computation community.
We explain the overview of our algebraic techniques in \Cref{sec:overview}.
We then compare our techniques to the previous related works in \Cref{sub:compare}.

\paragraph{Sensitive reachability oracle.}

It was asked as an open problem by Chechik et al. \cite{ChechikCFK17}
whether there is a much more efficient data structure if the query
is only about reachability between nodes and not distance. As an easy
observation from our technique, we give a strong affirmative answer
to this question:
\begin{theorem}
	\label{thm:reach} There is a Monte Carlo sensitive reachability oracle that preprocess
	an $n$-node graph in $O(n^{\omega})$ time and stores $O(n^2 \log n)$ bits. Then, given a set of
	$f$ edge insertions/deletions and node deletions, update the data
	structure in $O(f^{\omega})$ time and store additional $O(f^2 \log n)$ bits. 
	Then, given a query $(u,v)$, return if there is directed path from $u$ to $v$ in $O(f^{2})$
	time. 
\end{theorem}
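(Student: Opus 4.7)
The plan is to encode reachability algebraically through the matrix $M := (I-A)^{-1}$, where $A \in \F_p^{n\times n}$ is formed by replacing each edge of $G$ with an independent uniformly random element of a prime field $\F_p$ of size $\poly(n)$, and non-edge entries are $0$. In the preprocessing phase I would compute $M$ in $O(n^\omega)$ time via any fast matrix-inversion algorithm and store it in $O(n^2\log n)$ bits. The guiding fact is that $M_{uv}\neq 0$ iff $u$ reaches $v$ in $G$, with high probability: if $u$ cannot reach $v$, ordering the vertices so that the reachable-from-$u$ set comes first makes $A$ block-triangular, so $M$ is block-triangular and $M_{uv}=0$ deterministically; conversely, if $u$ reaches $v$, the contribution of any single witness path forces $M_{uv}$, viewed as a rational function of the random entries of $A$, to be a nonzero expression, so Schwartz--Zippel gives $M_{uv}\neq 0$ w.h.p.

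For the update I would let $S$ be the set containing the endpoints of all modified edges together with the deleted nodes, so $|S|=O(f)$, and write the new matrix as $A' = A + \Delta$ with $\Delta$ supported on rows/columns indexed by $S$. I then decompose $\Delta = UV^T$ with $U,V\in \F_p^{n\times k}$ and $k=O(f)$: each edge modification contributes a rank-$1$ term with both factors sparse, while each node deletion contributes two rank-$1$ terms of the forms $-e_v A[v,:]$ and $-A[:,v]\,e_v^T$ (plus an $O(1)$ diagonal correction), in which one factor is sparse and the other is dense. Sherman--Morrison--Woodbury then gives
\[
(I - A')^{-1} \;=\; M \;+\; MU\,\bigl(I - V^T M U\bigr)^{-1}\,V^T M.
\]
The crucial observation is that every entry of the $k\times k$ matrix $V^T M U$ reduces to an $O(1)$-time lookup into $M$, because the identities
\[
MA \;=\; AM \;=\; M - I, \qquad AMA \;=\; M - I - A,
\]
—both immediate from $(I-A)M = M(I-A) = I$—collapse any product of $M$ with a full row/column of $A$ into a single pre-stored entry of $M$. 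Hence $V^T M U$ is assembled in $O(f^2)$ time, and $K := (I - V^T M U)^{-1}$ is computed in $O(f^\omega)$ time and stored in $O(f^2\log n)$ extra bits, matching the claimed update cost.

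For a query $(u,v)$ I would return ``$u$ reaches $v$'' iff
\[
M_{uv} + \bigl(e_u^T M U\bigr)\, K\, \bigl(V^T M e_v\bigr) \;\neq\; 0.
\]
By the same identities the length-$k$ vectors $e_u^T M U$ and $V^T M e_v$ each require only $O(k)$ lookups into $M$: sparse columns pick out single entries, while dense columns $-A[:,v_j]$ and $A[v_i,:]^T$ contribute $-M_{u,v_j}$ and $M_{v_i,v}$ respectively via $MA=M-I$. Multiplying against the stored $K$ then costs $O(k^2)$, for a total of $O(f^2)$. Correctness of the ``nonzero iff reachable'' test on $A'$ is the same two-part argument as in preprocessing, now applied to the updated graph: block-triangularity handles non-reachability deterministically, and Schwartz--Zippel plus a union bound over the polynomially many queries handles reachability w.h.p.\ for $p$ chosen polynomially large in $n$.

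The hard part is keeping update and query within the stated bounds in the presence of node deletions, which are unavoidably rank-$1$ updates by a full row or column of $A$: naively these dense columns of $U,V$ would cost $\Omega(n)$ per entry of $V^T M U$ and per query, blowing both budgets. The point that removes the obstacle is the collapsing identity $MA = AM = M - I$, which says that any dense $A$-row or $A$-column multiplied against $M$ is \emph{already} encoded by a single entry of $M$; once this is recognized, the remainder is routine bookkeeping about which entries of $M$ to look up and how to combine sparse and dense factors.
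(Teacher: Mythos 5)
Your proof is correct and uses the same core mechanism as the paper: preprocess by explicitly inverting an $n\times n$ scalar matrix whose nonzero pattern encodes the edges, treat the update as a rank-$O(f)$ perturbation, apply Sherman--Morrison--Woodbury, observe that the small $k\times k$ middle matrix and the two length-$k$ query vectors reduce to $O(f^2)$ and $O(f)$ lookups into the preprocessed inverse, and test reachability by whether the resulting entry is nonzero (with a block-triangularity argument for soundness and Schwartz--Zippel for completeness). You work directly with $(\I-A)^{-1}$ while the paper maintains $\adj(A)$ together with $\det(A)$ via the adjoint version of Woodbury (\Cref{lem:adjointUpdate}); since the reachability case lives over a plain field $\F_p$ rather than $\F_p[X]$, the two bookkeeping schemes are interchangeable and your inverse-based version is actually a bit leaner. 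The one place you genuinely depart from the paper is node deletions: the paper's proof of \Cref{thm:reachability} only treats edge updates (each contributing one sparse column to $U$ and one to $V$), implicitly relying on the standard vertex-splitting reduction to turn a node deletion into a single edge deletion, whereas you attack the dense row/column directly and use the collapsing identities $MA=AM=M-\I$ and $AMA=M-\I-A$ to reduce every product of $M$ against a full $A$-row or $A$-column to a single precomputed entry of $M$. That is a clean and correct observation that makes the node-deletion claim self-contained, at the cost of slightly more bookkeeping than the splitting reduction; both routes land in the same $O(f^\omega)$ update and $O(f^2)$ query budgets.
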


Previously, there are only algorithms that handle $f=1$
edge deletions based on dominator trees \cite{LengauerT79,BuchsbaumGKRTW08,GeorgiadisT12,FraczakGMT14}
or $f\le2$ edge deletions \cite{Choudhary16}. These algorithms have
$O(1)$ update and query time. Another approach based on fault-tolerant
subgraphs\footnote{The goal is to find sparse subgraphs preserving reachability information
	of the original graph even after some edges are deleted.} gives algorithms with query time at least $\Omega(n)$ \cite{BaswanaCR15,BaswanaCR16}
and becomes trivial when $f\ge\log n$. When $f=O(1)$, our algorithm has $O(1)$
update and query time like the algorithms using the first
approach. Moreover, ours is the first which handles updates of any
size. 

It was shown in \cite{AbboudW14,HenzingerL0W-ITCS17} that, assuming
the \emph{Boolean Matrix Multiplication} conjecture, there cannot
be any constant $\epsilon>0$ and a ``combinatorial'' algorithm
for \Cref{thm:reach} which has $O(n^{3-\epsilon})$ preprocessing
time, and can handle $2$ edge insertions using $O(n^{2-\epsilon})$
update and query time. Our result does not refute the conjecture as
we use fast matrix multiplication.

\subsection{Technical overview}
\label{sec:overview}

\paragraph{Set the stage.}
%To see the high level idea of our approach, we will
The first step of all our results is to reduce the problem on graphs to algebraic problems using the following known reduction (see \Cref{lem:shortestDistanceReduction} for a more detailed statement).
Let $G$ be an $n$-node graph with integer weights from $[-W,W]$.
Let $\F$ be a finite field of size at least $n^3$.
We construct a polynomial matrix $A \in \F[X]^{n \times n}$ such that  
$A_{i,j} = a_{i,j} X^{W+c_{i,j}}$ where $c_{i,j}$ is the weight of edge $(i,j)$ and $a_{i,j}$ is a random element from $\F$.
Then, with high probability, we can read off the distance from $i$ to $j$ in $G$ from the $(i,j)$ entry of the adjoint matrix $\adj(A)$ of $A$, for all pairs $(i,j)$.
That is, it suffices to build a data structure on the above polynomial matrix $A$ that can handle updates and can return an entry of its adjoint $\adj(A)$. Updating $f$ edges in $G$ corresponds to adding $A$ with $C$ where $C\in \F[X]^{n \times n}$ has $f$ non-zero entries. Further we have with high probability that $\det(A) \neq 0$, so it is enough to focus on algorithms that work on non-singular matrices.
From now, we let $\F$ be an arbitrary field and we use the number of field operations as complexity measure.

\paragraph{Warm up: slow preprocessing.}
To illustrate the basic idea how to maintain the adjoint, we will prove the following:

\begin{lemma}
	\label{lem:simpleAdjoint}
	Let $A \in \F[X]^{n \times n}$ be a polynomial matrix of degree $d$ with $\det(A) \neq 0$, then there exists an algorithm that preproceses $A$ in $\tilde{O}(dn^{\omega+1})$ operations. Then, for any $C \in \F[X]^{n \times n}$ with $f$ non-zero entries of degree at most $d$, we can query any entry of $\adj(A+C)$ in $\tilde{O}(dnf^{\omega+1})$ operations, if $\det(A+C) \neq 0$.
	
	If $d=0$, then the preprocessing and query time are $O(n^\omega)$ and $O(f^\omega)$ respectively.
\end{lemma}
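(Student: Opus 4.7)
My plan is to combine a standard evaluation-interpolation preprocessing of $\adj(A)$ and $\det(A)$ with the Sherman-Morrison-Woodbury (SMW) identity, exploiting that $C$ has at most $f$ nonzero rows and $f$ nonzero columns. Collect these into supports $R,S\subseteq[n]$ with $|R|,|S|\le f$, and factor $C = I_R\,C_{R,S}\,I_S^\top$ with $I_R,I_S\in\F^{n\times f}$ the standard selection matrices and $C_{R,S}\in\F[X]^{f\times f}$ the non-trivial block of $C$. For the preprocessing I would evaluate $A$ at $\Theta(nd)$ field points, compute the scalar adjoint and determinant at each in $O(n^\omega)$ operations, and interpolate entrywise; this costs $\tilde O(dn^{\omega+1})$, and for $d=0$ collapses to one $O(n^\omega)$ scalar inversion.

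\textbf{Query identity.} Define
\[\tilde M \;:=\; \det(A)\,I_f + \adj(A)_{S,R}\,C_{R,S}\;\in\;\F[X]^{f\times f},\]
whose entries have degree at most $nd$. Since $\tilde M/\det(A) = I_f + (A^{-1})_{S,R}\,C_{R,S}$, Sylvester's identity yields $\det(A+C) = \det(\tilde M)/\det(A)^{f-1}$, so the hypothesis $\det(A+C)\neq 0$ makes $\tilde M$ invertible over $\F(X)$. Applying SMW to $A+C$, multiplying by $\det(A+C)$, and substituting $A^{-1}=\adj(A)/\det(A)$ together with $\adj(M)=\adj(\tilde M)/\det(A)^{f-1}$, all $\det(A)$ powers collect into the polynomial identity
\[\adj(A+C)_{i,j} \;=\; \frac{\det(\tilde M)\,\adj(A)_{i,j}\;-\;\adj(A)_{i,R}\,C_{R,S}\,\adj(\tilde M)\,\adj(A)_{S,j}}{\det(A)^f},\]
and since the left-hand side lies in $\F[X]$ the division on the right is exact.

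\textbf{Evaluating the identity.} I would then (a) form $\tilde M$ by one $f\times f$ polynomial matrix multiplication in $\tilde O(f^\omega nd)$; (b) compute $\det(\tilde M)$ and $\adj(\tilde M)$ by evaluation-interpolation at $\Theta(fnd)$ points, each requiring an $O(f^\omega)$ scalar adjoint, for a total of $\tilde O(f^{\omega+1}nd)$; (c) assemble the numerator by a constant number of row-vector-by-matrix products of polynomials of degree $O(fnd)$; and (d) compute $\det(A)^f$ by $O(\log f)$ polynomial squarings and perform one exact polynomial division. Step (b) dominates, giving $\tilde O(dn\,f^{\omega+1})$ per query; when $d=0$ every polynomial step collapses to $O(f^\omega)$ scalar work and the query cost drops to $O(f^\omega)$. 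The main care point I expect is the bookkeeping of the $\det(A)$ powers: one has to track how $M^{-1}=\det(A)\,\tilde M^{-1}$ and $\adj(M)=\adj(\tilde M)/\det(A)^{f-1}$ propagate through the SMW expansion, so that the final numerator is indeed divisible by $\det(A)^f$ as polynomials and the identity makes sense directly in $\F[X]$.
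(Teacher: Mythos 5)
Your proposal is correct and follows essentially the same route as the paper: precompute $\adj(A)$ and $\det(A)$ explicitly (degree $O(dn)$, cost $\tilde O(dn^{\omega+1})$), then apply the exact polynomial Sherman--Morrison--Woodbury identity (the paper's \Cref{lem:adjointUpdate}) to reduce each query to computing the adjoint and determinant of an $f\times f$ polynomial matrix of degree $O(dn)$, the dominant $\tilde O(dnf^{\omega+1})$ step. Your choice to factor $C$ through the row/column supports $R,S$ rather than the paper's one-nonzero-per-column factorization $C=UV^\top$ is a cosmetic variant (just watch the exponent, which should be $|S|$, not $f$, unless you pad), and your evaluation-interpolation scheme realizes the same complexity bounds the paper invokes by citation.
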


This immediately implies a weaker statement of \Cref{thm:sense st} when $\mu = 1$ and the edge updates and the pair to be queried are given at the same time. 
We remark that, from the simple proof below, this already gives us the first non-trivial sensitive distance oracle which can handle any number of $f$ updates. Previous techniques inherently require $f \le \log n / \log \log n$.
As the reachability problem can be considered a shortest path problem, where every edge has weight zero, we also obtain a result similar to \Cref{thm:reach} from \Cref{lem:simpleAdjoint} for $d=0$. 

\begin{corollary}\label{cor:simpleResult}
	Let $G$ be some directed graph with integer weights in $[-W,W]$, then there exists an algorithm that preprocess $G$ in $\tilde{O}(Wn^{\omega+1})$ time. Then, given $f$ edge updates to $G$ and a query pair $(u,v)$, it returns the distance from $u$ to $v$ in the updated graph in $\tilde{O}(Wnf^{\omega+1})$ time.
\end{corollary}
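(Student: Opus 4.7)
The plan is to apply \Cref{lem:simpleAdjoint} to the polynomial matrix obtained from the reduction described at the start of this section. Concretely, I first encode $G$ as the matrix $A\in\F[X]^{n\times n}$ where $A_{i,j}=a_{i,j}X^{W+c_{i,j}}$ on edges and $0$ elsewhere, with each $a_{i,j}$ drawn uniformly at random from a field $\F$ of size at least $n^{3}$ (cf.\ \Cref{lem:shortestDistanceReduction}). Because all weights lie in $[-W,W]$, every nonzero entry has degree in $[0,2W]$, so $A$ has polynomial degree $d=2W$. The Schwartz--Zippel lemma applied to $\det(A)$ (a polynomial in the $a_{i,j}$ of degree at most $n$ over a field of size $\ge n^{3}$) gives $\det(A)\neq 0$ with probability at least $1-1/n^{2}$, and the same applies after an update; on that event the reduction lets us read off the distance in the corresponding graph from the prescribed entry of $\adj(A)$.

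Next, I feed $A$ into \Cref{lem:simpleAdjoint} as preprocessing, costing $\tilde{O}(dn^{\omega+1})=\tilde{O}(Wn^{\omega+1})$ operations, which is $\tilde{O}(Wn^{\omega+1})$ time because each field operation costs $\mathrm{polylog}(n)$ bit operations (absorbed in $\tilde{O}$). When the $f$ edge updates and query pair $(u,v)$ arrive, I build $C\in\F[X]^{n\times n}$ supported on exactly the set of updated edges: for each update, $C$ contributes one nonzero entry (a signed monomial $\pm a_{i,j}X^{W+c_{i,j}}$ or its replacement) so that $A+C$ equals the matrix the reduction would have produced for the updated graph $G'$. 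Thus $C$ has at most $f$ nonzero entries, each of degree at most $2W$. I then query the $(u',v')$ entry of $\adj(A+C)$ via \Cref{lem:simpleAdjoint} in $\tilde{O}(dnf^{\omega+1})=\tilde{O}(Wnf^{\omega+1})$ operations, and recover the distance from $u$ to $v$ in $G'$ by inspecting that polynomial (e.g.\ by locating its lowest-degree nonzero coefficient, as specified by the reduction).

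There is essentially no deep obstacle here: the corollary is a direct composition of the algebraic reduction with \Cref{lem:simpleAdjoint}. The only pieces that need a brief verification are that (i) the update matrix $C$ indeed has $f$ nonzero entries of degree at most $2W$, which follows because the reduction produces $A$ in a one-entry-per-edge form, and (ii) the non-singularity condition $\det(A+C)\neq 0$ required by \Cref{lem:simpleAdjoint} holds with high probability, which follows by a second application of Schwartz--Zippel to the updated matrix using independence of the fresh random coefficients introduced for inserted edges. Taking a union bound over both bad events yields correctness with probability $1-1/n^{\Omega(1)}$, completing the sketch within the advertised $\tilde{O}(Wn^{\omega+1})$ preprocessing and $\tilde{O}(Wnf^{\omega+1})$ update-plus-query time.
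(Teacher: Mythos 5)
Your proposal is correct and matches the paper's intended (if unstated) route: construct $A$ via \Cref{lem:shortestDistanceReduction} with $\deg(A)\le 2W$, preprocess it with \Cref{lem:simpleAdjoint}, encode the $f$ edge updates as a matrix $C$ with at most $f$ nonzero entries of degree at most $2W$, query $\adj(A+C)_{i,j}$, and read off the distance from the lowest-degree monomial. One small slip in your write-up: the reduction also sets the diagonal entries $A_{i,i}=X^W$, and this is what guarantees (via the identity permutation) that the coefficient of $X^{Wn}$ in $\det(A)$ is a nonzero polynomial in the $a_{i,j}$, which is what your Schwartz--Zippel step is really applied to; since you cite \Cref{lem:shortestDistanceReduction} the construction is understood, so this is not a gap.
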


\begin{corollary}\label{cor:simpleResultReachability}
	Let $G$ be some directed graph, then there exists an algorithm that preprocess $G$ in $\tilde{O}(n^{\omega})$ time. Then, given $f$ edge updates to $G$ and a query pair $(u,v)$, it returns the reachability from $u$ to $v$ in the updated graph in $\tilde{O}(f^{\omega})$ time.
\end{corollary}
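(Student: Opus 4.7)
The plan is to obtain Corollary~\ref{cor:simpleResultReachability} as a direct specialization of Lemma~\ref{lem:simpleAdjoint}: reachability is exactly the ``distance'' problem in which every edge has weight $0$ and $W=0$, so the polynomial matrix produced by the reduction sketched in the ``Set the stage'' paragraph collapses to a constant matrix over $\F$ (i.e.\ $d=0$), and Lemma~\ref{lem:simpleAdjoint} then gives the claimed $O(n^\omega)$ preprocessing and $O(f^\omega)$ query time.

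Concretely, I first apply the reduction to obtain a matrix $A \in \F^{n\times n}$, where $\F$ is a field of size at least $n^{3}$, whose entry $A_{i,j}$ is an independent uniformly random element of $\F$ whenever $(i,j)$ is an edge of $G$ (with the remaining entries set as the reduction prescribes, e.g.\ $A_{i,i}=1$). By the reduction referenced in the ``Set the stage'' paragraph, specialized to $W=0$, with high probability $\det(A)\neq 0$ and, for every update matrix $C$ with at most $f$ non-zero entries, $\det(A+C)\neq 0$ and $\adj(A+C)_{u,v}\neq 0$ iff there is a directed $u$-to-$v$ path in the updated graph.

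Second, I invoke Lemma~\ref{lem:simpleAdjoint} on $A$ with $d=0$. Its preprocessing runs in $O(n^\omega)$ field operations. For a given batch of $f$ edge updates, I build $C\in \F^{n\times n}$ with at most $f$ non-zero entries (fresh random values for insertions, zeros for deletions) and call the query procedure for the single entry $(u,v)$ of $\adj(A+C)$ in $O(f^\omega)$ operations; I then report ``reachable'' iff the returned value is non-zero.

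The only step that is not purely mechanical is verifying correctness: I need that with high probability $A+C$ is non-singular and $\adj(A+C)_{u,v}$ faithfully records reachability. Both follow from the Schwartz--Zippel lemma applied to $\det$ and to each entry of the adjoint viewed as polynomials in the random coefficients $a_{i,j}$; the field size $|\F|\ge n^{3}$ yields failure probability $n^{-\Omega(1)}$. There is no other genuine obstacle — the algorithmic content is entirely in Lemma~\ref{lem:simpleAdjoint}, and the $\tilde{O}(\cdot)$ in the statement (versus the clean $O(\cdot)$ in Lemma~\ref{lem:simpleAdjoint}) merely absorbs the polylog cost of arithmetic in $\F$.
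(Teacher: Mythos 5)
Your approach matches the paper's: treat reachability as the shortest-path problem with all weights zero (so $W=0$), so that the reduction of \Cref{lem:shortestDistanceReduction} yields a degree-$0$ matrix $A\in\F^{n\times n}$, and then apply \Cref{lem:simpleAdjoint} with $d=0$ to get $O(n^\omega)$ preprocessing and $O(f^\omega)$ per query, reporting ``reachable'' iff $\adj(A+C)_{u,v}\neq 0$. One small but genuine fix: for the deletion of an edge $(i,j)$ the update matrix must carry $C_{i,j}=-A_{i,j}$ so that $(A+C)_{i,j}=0$; writing $C_{i,j}=0$, as your parenthetical suggests, leaves the deleted edge in the matrix and would give wrong answers. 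With that correction the argument is the same as the paper's.
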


To prove \Cref{lem:simpleAdjoint}, we use the key equality below based on the Sherman-Morrison-Woodbury formula. The proof is deferred to \Cref{sec:proof adjointUpdate}. 

\begin{lemma}\label{lem:adjointUpdate}
	
	Let $A$ be an $n \times n$ matrix and $U,V$ be $n \times f$ matrices, such that $\det(A),\det(A+UV^\top ) \neq 0$. Define the $f \times f$ matrix $M:=\I \cdot \det(A) + V^\top \adj(A) U$. Then, we have 
	$$
	\adj(A+UV^\top ) = \frac{\adj(A)\det(M) - \left(\adj(A) U\right) \: \adj(M) \: \left(V^\top \adj(A)\right)} {\det(A)^f }.
	$$
	
\end{lemma}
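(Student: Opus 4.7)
The plan is to derive the identity by combining the Sherman-Morrison-Woodbury formula with the matrix determinant lemma, and then convert inverses to adjoints via the relation $\adj(X)=\det(X)\cdot X^{-1}$. Since we work over $\F[X]$ (or, in the general statement, over a commutative ring with identity), entries of adjoints are polynomial in the entries of the original matrices, so any identity established over the field of fractions $\F(X)$ automatically lifts back to an identity of polynomial matrices. Thus I would first pass to $\F(X)$, where both $A$ and $A+UV^\top$ are invertible by hypothesis, derive the equation there, and then note that the final expression is an identity of polynomial matrices.

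First, I would recall the two standard tools. The Sherman-Morrison-Woodbury identity gives
\[
(A+UV^\top)^{-1} \;=\; A^{-1} - A^{-1}U\,(I+V^\top A^{-1}U)^{-1}\,V^\top A^{-1},
\]
and the matrix determinant lemma gives
\[
\det(A+UV^\top) \;=\; \det(A)\,\det(I+V^\top A^{-1}U).
\]
Substituting $A^{-1}=\adj(A)/\det(A)$ into $I+V^\top A^{-1}U$ yields exactly $M/\det(A)$, so $I+V^\top A^{-1}U = M/\det(A)$. Hence $\det(I+V^\top A^{-1}U) = \det(M)/\det(A)^{f}$ and $(I+V^\top A^{-1}U)^{-1} = \det(A)\,\adj(M)/\det(M)$.

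Next I would assemble the result. Using $\adj(A+UV^\top)=\det(A+UV^\top)\cdot(A+UV^\top)^{-1}$, the matrix determinant lemma gives $\det(A+UV^\top)=\det(M)/\det(A)^{f-1}$, and then Sherman-Morrison-Woodbury with the substitutions above yields
\[
\adj(A+UV^\top)
= \frac{\det(M)}{\det(A)^{f-1}}\!\left[\frac{\adj(A)}{\det(A)} - \frac{\adj(A)U}{\det(A)}\cdot\det(A)\,\frac{\adj(M)}{\det(M)}\cdot\frac{V^\top\adj(A)}{\det(A)}\right].
\]
Clearing denominators gives the claimed formula
\[
\adj(A+UV^\top) \;=\; \frac{\adj(A)\det(M) - \bigl(\adj(A)U\bigr)\,\adj(M)\,\bigl(V^\top\adj(A)\bigr)}{\det(A)^{f}}.
\]

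The main subtlety is justifying the use of inverses of $A$, $M$, and $I+V^\top A^{-1}U$ in the intermediate steps: $A$ is invertible over $\F(X)$ by assumption, and from $\det(M)=\det(A+UV^\top)\cdot\det(A)^{f-1}$ (which drops out of the matrix determinant lemma applied above) the nonvanishing of $\det(A+UV^\top)$ guarantees $\det(M)\neq 0$, so $M$ is also invertible in $\F(X)$. Once the identity is established in $\F(X)^{n\times n}$, both sides are in fact polynomial matrices with denominator $\det(A)^{f}$, so the equality of formal expressions in the statement holds.
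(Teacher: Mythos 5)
Your proof is correct and follows essentially the same route as the paper's: both derive the identity from the Sherman--Morrison--Woodbury formula together with the determinant lemma $\det(A+UV^\top)=\det(A)\det(\I+V^\top A^{-1}U)$ (which the paper obtains from the Sylvester identity), convert inverses to adjoints via $\adj(X)=\det(X)X^{-1}$, and observe that $\I+V^\top A^{-1}U = M/\det(A)$ so that $\det(M)=\det(A+UV^\top)\det(A)^{f-1}$. Your added remark about passing to the field of fractions and noting $\det(M)\neq 0$ is a reasonable bit of bookkeeping that the paper leaves implicit.
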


The algorithm for \Cref{lem:simpleAdjoint} is as follows:
We preprocess $A$ by computing $\det(A)$ and $\adj(A)$. As $\det(A)$ and $\adj(A)$ have degree at most $dn$, this takes $\tilde{O}(dn \times n^{\omega}) = \tilde{O}(dn^{\omega+1})$ field operations \cite[Chapter 1]{BuergisserCS97} (or just $O(n^\omega)$ if $d=0$).
Next, we write $C = UV^\top$ where $U,V \in \F[X]^{n \times f}$ have only one non-zero entry of degree $\le d$ per column.
To compute $\adj(A + UV^\top)_{i,j}$, we simply compute
\begin{align}
\adj(A)_{i,j}\frac{\det(M)}{\det(A)^f}-
\frac{ 
\overbrace{\left(\vec{e}_i^\top \adj(A) U\right)}^{:= \vec{u}}
 \: \adj(
\overbrace{\I \cdot \det(A) + V^\top \adj(A) U}^{=:M} 
) \: 
\overbrace{\left(V^\top \adj(A) \vec{e}_j \right)}^{:= \vec{v}}
 } {\det(A)^f }\label{eqn:entryQueryExample}
\end{align}
where $\vec{e}_i$ is the $i$-th standard unit vector.
This computation can be separated into the following steps:
\begin{enumerate}
\item Compute $\adj(A)_{i,j}$, $\vec{u} := \vec{e}_i^\top \adj(A) U$ and $\vec{v} := V^\top \adj(A) \vec{e}_j$. Note that because of the sparsity of $U$ and $V$, $\vec{u}$ and $\vec{v}$ are essentially just vectors of $f$ elements of $\adj(A)$, each multiplied by a non-zero element of $U$ and $V$. Thus this step can be summarized as obtaining $O(f)$ elements of $\adj(A)$. \label{step:readfelements}
\item Compute $V^\top \adj(A) U$ which are likewise just $f^2$ elements of $\adj(A)$, each multiplied by a non-zero element of $U$ and $V$. So this time we have to obtain $O(f^2)$ entries of $\adj(A)$.\label{step:readf2elements}
\item Compute the adjoint $\adj(M)$ and determinant $\det(M)$. \label{step:computeAdjoint}
\item Compute $\det(A)^f$. \label{step:power}
\item Compute $\adj(A)_{i,j}\frac{\det(M)}{\det(A)^f}$ and vector-matrix-vector product $\vec{u}\adj(M)\vec{v}$ and divide it by $\det(A)^f$. Then subtract the two values and we obtain $\adj(A+UV^\top )_{i,j}$.\label{step:expensive}
\end{enumerate}

Steps \ref{step:readfelements} and \ref{step:readf2elements} require only $\tilde{O}(dnf^2)$ field operations as we just have to read $O(f^2)$ entries of $\adj(A)$ and multiply them by some small $d$-degree polynomials from $U$ and $V$. 
In step \ref{step:computeAdjoint} we have to compute the adjoint and determinant of a $f \times f$ matrix of degree $dn$. 
This takes $\tilde{O}(dnf^{\omega+1})$ operations.
Step \ref{step:power} computes $\det(A)^f$ where $\det(A)$ is of degree $dn$, which takes $\tilde{O}(dnf)$  operations. 
Step \ref{step:expensive} takes $\tilde{O}(dnf^3)$ because $\adj(M)$ is a degree $dnf$ matrix of dimension $f \times f$. The total number of operations is thus $\tilde{O}(dnf^{\omega+1})$. The algorithm does not require the upper bound of $f \le \log n / \log \log n$ as in \cite{WeimannY13}.

For the reachability case, when $d=0$, all entries of the matrices and vectors are just field elements, so steps \ref{step:readfelements} and \ref{step:readf2elements} need only $O(f^2)$ operations. Step \ref{step:computeAdjoint} needs $O(f^\omega)$ operations, while \ref{step:power} can be done in just $O(\log f)$ operations, and the last step \ref{step:expensive} requires only $O(f^2)$ operations.

\paragraph{Key technique: kernel basis decomposition.}
The biggest bottleneck in \Cref{lem:simpleAdjoint} is explicitly computing $\adj(A)$ and $\adj(M)$. For $\adj(A)$ it already takes $\Omega(dn^3)$ operations in the preprocessing step just to write down the $n^2$ entries of $\adj(A)$ each of which has degree upto $dn$.
What we need is an \emph{adjoint oracle}, i.e. a data structure on $A$ with fast preprocessing that can still quickly answer queries about entries of $\adj(A)$.

By replacing this data structure in the five steps of the proof of \Cref{lem:simpleAdjoint}, this immediately gives distance oracles in the sensitive setting which dominate previous results when $W = \tilde{O}(1)$. 
(This is how we obtain \Cref{thm:sense st}).

The key contribution of this paper is to realize that the technique in \cite{JeannerodV05,ZhouLS15} actually gives the desired adjoint oracle.
This technique, which we call the \emph{kernel basis decomposition}, is introduced by Jeannerod and Villard \cite{JeannerodV05} and then improved by Zhou, Labahn, and Storjohann \cite{ZhouLS15}. 
It is originally used for inverting a polynomial matrix of degree $d$ in $\tilde{O}(dn^3)$ operations. 
However, the following adjoint oracle is implicit in Section 5.3 of \cite{ZhouLS15}\footnote{Section 5.3 of \cite{ZhouLS15} discusses computing $v^\top A^{-1}$, where the result is given by a vector $u$ with entries of the form $p/q$ where $p,q$ are polynomials of degree at most $O(dn)$.
	Since $\det(A)$ can be computed in $\tilde{O}(dn^\omega)$ \cite{Storjohann03,LabahnNZ17} and $\adj(A) = A^{-1} \det(A)$, we get \Cref{thm:kernelBasePreprocessing}.
}:

\begin{theorem}
	\label{thm:kernelBasePreprocessing}
	There is a data-structure that preprocesses $B \in \F[X]^{n \times n}$ where $\det(B) \neq 0$ and $\deg(B) \le d$ in $\tilde{O}(dn^\omega)$ operations.
	Then, given any $\vec{v} \in \F[X]^n$ where $\deg(\vec{v}) \le d$, it can compute $\vec{v}^\top \adj(B)$ in $\tilde{O}(dn^2)$ operations.
\end{theorem}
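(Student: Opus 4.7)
The plan is to reduce the problem of computing $\vec{v}^\top \adj(B)$ to two subproblems: computing the scalar $\det(B)$, and computing the rational vector $\vec{v}^\top B^{-1}$. For the determinant, one can invoke Storjohann's or the Labahn--Neiger--Zhou polynomial determinant algorithm to obtain $\det(B)$, a polynomial of degree at most $dn$, in $\tilde{O}(dn^\omega)$ operations, and store it during preprocessing. Since $\adj(B) = \det(B) \cdot B^{-1}$, once we have $\vec{v}^\top B^{-1}$ represented as $n$ rational functions with common denominator $\det(B)$, we recover $\vec{v}^\top \adj(B)$ by just reading off the numerator polynomials (each of degree $\le dn$); this costs at most $\tilde{O}(dn^2)$.

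The core of the argument is therefore building an oracle that, after $\tilde{O}(dn^\omega)$ preprocessing of $B$, computes $\vec{v}^\top B^{-1}$ in $\tilde{O}(dn^2)$ operations. For this I would invoke the kernel basis decomposition of Jeannerod--Villard, as sharpened by Zhou--Labahn--Storjohann. The idea is to split $B$ row-wise into a top half $B_U$ and bottom half $B_L$ of $n/2$ rows each, and compute a shift-reduced basis $N \in \F[X]^{n \times n/2}$ of the right kernel of $B_U$, together with a complementary matrix $N'$ such that $[N \mid N']$ is unimodular. This yields a factorization of $B^{-1}$ into a product involving $N$, $N'$, and the inverses of two smaller non-singular polynomial matrices of dimension $n/2$ whose degrees are controlled by the shifted row degrees of $N$. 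Recursing on the two smaller inverse problems produces a balanced tree of logarithmic depth whose nodes, in total, store only $\tilde{O}(dn^2)$ field elements and can be assembled in $\tilde{O}(dn^\omega)$ operations using fast polynomial matrix multiplication.

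To answer a query $\vec{v}^\top B^{-1}$, traverse this tree and perform, at each level, a constant number of vector-times-polynomial-matrix products against the stored factors. The key invariant to track is that, level by level, although the polynomial degrees roughly double, the matrix dimensions halve, so each level costs $\tilde{O}(dn^2)$ and the total over $O(\log n)$ levels remains $\tilde{O}(dn^2)$. Combined with the stored $\det(B)$, this yields $\vec{v}^\top \adj(B)$ in the claimed time.

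The main obstacle is entirely on the algebraic side: establishing that the shift-reduced kernel basis of $B_U$ has the right (average) column-degree profile so that the factorization truly shrinks the problem, and that every sub-multiplication along the recursion fits within the $\tilde{O}(dn^2)$ per-level budget. These are precisely the technical contents of Section 5.3 of Zhou--Labahn--Storjohann, which handles the analogous left-solve $\vec{v}^\top A^{-1}$; so rather than re-deriving the degree balancing from scratch, the proposal is to cite that construction as a black box and combine it with the polynomial determinant algorithm to package the result as the adjoint oracle stated in the theorem.
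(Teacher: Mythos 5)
Your proposal matches the paper's own (footnote-level) proof exactly: cite Section~5.3 of Zhou--Labahn--Storjohann for the left-solve $\vec{v}^\top B^{-1}$ with $\tilde{O}(dn^\omega)$ preprocessing and $\tilde{O}(dn^2)$ per-query cost, compute $\det(B)$ via Storjohann or Labahn--Neiger--Zhou in $\tilde{O}(dn^\omega)$, and combine using $\adj(B) = \det(B)\,B^{-1}$. One small slip in your sketch of the recursion --- in the ZLS decomposition the complement to the kernel basis $N$ of $B_U$ is itself a kernel basis of $B_L$, making $B[N'\,|\,N]$ block-\emph{diagonal}, not a unimodular complement yielding a block-triangular form --- but since you invoke ZLS~5.3 as a black box this has no bearing on correctness.
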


However, to get $o(Wn^2)$ query time as in \Cref{thm:sense st}, \Cref{thm:kernelBasePreprocessing} is not enough. 
Fortunately, by modifying the technique from \cite{ZhouLS15} in a white-box manner (see \Cref{sec:inverseOracle} for details), we can obtain the following trade-off which is essential for \Cref{thm:sense st}. The result essentially interpolates the exponents of the following two extremes: $\tilde{O}(dn^3)$ preprocessing and $O(dn)$ query time when computing the adjoint explicitly, or $\tilde{O}(dn^\omega)$ preprocessing and $O(dn^2)$ query time when using \Cref{thm:kernelBasePreprocessing}.

\begin{theorem}\label{thm:extendedKernelBasePreprocessing}
	For any $0 \le \mu \le 1$, 
	there is a data-structure that preprocesses $B \in \F[X]^{n \times n}$ where $\det(B) \neq 0$ and $\deg(B) \le d$ in  
	$O(dn^{\omega  +(3-\omega)\mu})$ operations.
	Then, given any pair $(i,j)$, it returns $\adj(B)_{i,j}$ in $O(dn^{2-\mu})$ operations.
\end{theorem}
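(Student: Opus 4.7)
The plan is to establish the trade-off in \Cref{thm:extendedKernelBasePreprocessing} by a white-box modification of the Zhou-Labahn-Storjohann (ZLS) algorithm that underlies \Cref{thm:kernelBasePreprocessing}. The two endpoints of the trade-off serve as guides. At $\mu = 0$, I would simply invoke \Cref{thm:kernelBasePreprocessing} directly: preprocess $B$ in $\tilde O(dn^\omega)$ operations, compute the entire row $\vec e_i^\top \adj(B)$ in $\tilde O(dn^2)$ operations at query time, and read off the $j$-th entry. At $\mu = 1$, I would invoke \Cref{thm:kernelBasePreprocessing} once per row during preprocessing, materialising all of $\adj(B)$ explicitly in $\tilde O(dn^3)$ operations, so that every subsequent entry query is a single $O(dn)$ read of a degree-$dn$ polynomial. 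The task is to interpolate smoothly between these two extremes.

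For intermediate $\mu \in (0,1)$, I would expose the internal structure of the ZLS algorithm. Recall that ZLS represents $\adj(B) = \det(B) \cdot B^{-1}$ as a product of $O(\log n)$ polynomial matrices arising from a recursive kernel basis decomposition, where the dimensions and degrees of the factors vary geometrically across levels. A single query $\vec v^\top \adj(B)$ evaluates this product on $\vec v^\top$ level by level, each level costing $\tilde O(dn^2)$. The plan is to push additional work into preprocessing that executes some of these levels in a batched fashion. Concretely, I would partition the row indices into $n^{1-\mu}$ groups of size $n^\mu$ and, for each group, apply the upper portion of the ZLS query algorithm simultaneously to all $n^\mu$ standard basis vectors corresponding to the rows of the group, replacing each vector-matrix product by an $n^\mu \times n$ by $n \times n$ rectangular polynomial matrix multiplication. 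Using fast rectangular matrix multiplication, which saves a factor of $n^{(\omega-2)\mu}$ per level over the naive row-by-row approach, and tracking dimensions and degrees along the recursion, the batched preprocessing should telescope to $\tilde O(dn^{\omega+(3-\omega)\mu})$, smoothly interpolating between $\tilde O(dn^\omega)$ and $\tilde O(dn^3)$ as $\mu$ ranges over $[0,1]$.

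The precomputed data per group is a partially-evaluated ZLS state rather than the explicit rows; choosing \emph{where} to cut off the batched levels is itself controlled by $\mu$, so that the remaining per-query work matches the claimed bound. A query $(i,j)$ then identifies the group containing $i$, loads the corresponding precomputed state, and completes the remaining (lower) levels of the ZLS evaluation for just that single row, extracting the $j$-th coordinate along the way in $O(dn^{2-\mu})$ operations.

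The main obstacle will be the level-by-level bookkeeping for the batched evaluation. At each recursion level of the ZLS decomposition the intermediate matrices have prescribed dimensions and degrees, and I expect that verifying the claimed preprocessing exponent requires a careful induction which confirms that the rectangular multiplication savings compose correctly across all $O(\log n)$ levels, as well as that the residual work handed to each query is bounded by $O(dn^{2-\mu})$ rather than $O(dn^2)$. The guiding principle—that each unit of $\mu$ shifts $(\omega - 2)$ units of the exponent from the query side into the preprocessing side via rectangular matrix multiplication—is exactly what produces the slope $(3-\omega)$ in the preprocessing exponent and the slope $-1$ in the query exponent.
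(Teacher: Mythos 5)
Your high-level architecture is the right one and matches the paper: cut the ZLS chain $\adj(B) = \bigl(\prod_i A_i\bigr)\det(B)D^{-1}$ at a level $k$ with $2^k\approx n^\mu$, push the first $k$ levels into preprocessing, and at query time finish the remaining levels for the single row $i$ while tracking only the diagonal block containing column $j$ (this last pruning — formalised in the paper via the matrices $A_r^{[j]}$ and \Cref{lem:computePartialRow} — is what brings the query from $\tilde O(dn^2)$ down to $\tilde O(dn^{2-\mu})$; your phrase ``extracting the $j$-th coordinate along the way'' glosses over it, but it is essential).

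The preprocessing mechanism you propose, however, does not give the claimed bound. You batch the rows into $n^{1-\mu}$ groups of size $n^\mu$ and compute each group's prefix by rectangular multiplications. At level $i$, the matrix $A_i$ is block diagonal with $2^{i-1}$ blocks of dimension $m=n/2^{i-1}$ and (average) degree $d2^{i-1}$. For a group of $g$ rows with $g\le m$, the rectangular product $g\times m$ by $m\times m$ costs $\tilde O\bigl(d2^{i-1}\,g^{\omega-2}m^2\bigr)$; summing over the $2^{i-1}$ blocks and the $n/g$ groups, the level-$i$ cost is $\tilde O\bigl(dn^3 g^{\omega-3}\bigr)$. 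With $g=n^\mu$ this is $\tilde O\bigl(dn^{3-(3-\omega)\mu}\bigr)$ per level, and the case $g\le m$ holds for all $i\le k$ whenever $\mu\le 1/2$; so for the whole range $\mu<1/2$ your preprocessing is $\tilde O\bigl(dn^{3-(3-\omega)\mu}\bigr)$, strictly worse than the claimed $\tilde O\bigl(dn^{\omega+(3-\omega)\mu}\bigr)$, and in the limit $\mu\to 0$ it degenerates to $\tilde O(dn^3)$ rather than $\tilde O(dn^\omega)$. (Equivalently: since $\omega-3<0$, the per-level cost $dn^3 g^{\omega-3}$ is \emph{decreasing} in the group size $g$, so small groups hurt; your claimed per-level saving of ``$n^{(\omega-2)\mu}$'' is also not the actual saving $n^{(3-\omega)\mu}$ of the $g\times m$ by $m\times m$ rectangular product.) The batching only recovers the right exponent once $g\ge n^{1-\mu}$.

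The paper sidesteps all of this by \emph{not} batching: it computes the full prefix $M_k=\prod_{i\le k}A_i$ in one go (equivalently $g=n$), block by block, using ZLS Lemma~11 (quoted as \Cref{lem:computeMcomplexity}), which bounds the cost of advancing one block from level $i$ to $i+1$ by $\tilde O\bigl(n(n/2^i)^{\omega-1}(1+d2^i)\bigr)$. Summing over the $2^i$ blocks gives $\tilde O\bigl(dn^\omega 2^{i(3-\omega)}\bigr)$ per level, which telescopes to $\tilde O\bigl(dn^{\omega+(3-\omega)\mu}\bigr)$ after $k=\lceil\log n^\mu\rceil$ levels (\Cref{lem:adjointPreprocessing}). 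So the correct white-box modification is simply ``materialise the prefix product and stop early'', with no row batching and no separate invocation of rectangular matrix multiplication.
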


To see the main idea, we give a slightly oversimplified description of the oracle in \Cref{thm:kernelBasePreprocessing} which allows us to show how to modify the technique to obtain \Cref{thm:extendedKernelBasePreprocessing}.
Below, we write a number for each matrix entry to indicate a bound on the degree, e.g. when we write $(4,4,4)^\top$ then we mean a 3-dimensional vector with entries of degree at most 4.

Suppose that we are now working with an $n \times n$ matrix $B$ of degree $d$.  Then \cite{ZhouLS15} (and \cite{JeannerodV05} for a special type of matrices) is able to find a full-rank matrix $A$ of degree $d$ in $\tilde{O}(dn^\omega)$ field operations, such that
\begin{align*}
\underbrace{\left(
	\begin{array}{cccc}
	d&d&d&d\\
	d&d&d&d\\
	\hline
	d&d&d&d\\
	d&d&d&d
	\end{array}
	\right)}_{B}
\underbrace{\left(
	\begin{array}{cc|cc}
	d&d&d&d\\
	d&d&d&d\\
	d&d&d&d\\
	d&d&d&d
	\end{array}
	\right)}_{A}
=
\left(
\begin{array}{cc|cc}
2d&2d&  &  \\
2d&2d&  &  \\
\hline
&  &2d&2d\\
&  &2d&2d
\end{array}
\right).
\end{align*}
Here the empty sections of the matrices represent zeros in the matrix and the $d$s represent entries of degree at most $d$.
This means the left part of $A$ is a \emph{kernel-base} of the lower part of $B$ and likewise the right part of $A$ is a \emph{kernel-base} of the lower part of $B$.

This procedure can now be repeated on the two smaller $n/2 \times n/2$ matrices of degree $2d$. After $\log n$ such iterations we have:
\begin{align*}
\underbrace{\left(
	\begin{array}{cccc}
	d&d&d&d\\
	d&d&d&d\\
	\hline
	d&d&d&d\\
	d&d&d&d
	\end{array}
	\right)}_{B}
\underbrace{\left(
	\begin{array}{cc|cc}
	d&d&d&d\\
	d&d&d&d\\
	d&d&d&d\\
	d&d&d&d
	\end{array}
	\right)}_{A_1}
\underbrace{\left(
	\begin{array}{cc|cc}
	2d&2d&  &  \\
	2d&2d&  &  \\
	\hline
	&  &2d&2d\\
	&  &2d&2d
	\end{array}
	\right)}_{A_2}
\cdots
=
\underbrace{\left(
	\begin{array}{cccc}
	\cline{1-1}
	\multicolumn{1}{|c|}{dn}& & & \\
	\cline{1-2}
	&\multicolumn{1}{|c|}{dn}& & \\
	\cline{2-2}
	& &\ddots& \\
	\cline{4-4}
	& & &\multicolumn{1}{|c|}{dn}\\
	\cline{4-4}
	\end{array}
	\right)}_{D}
\end{align*}
We call this chain $A_1,\dots,A_{\log n}$ the \emph{kernel basis decomposition} of $B$.
Here, $B \prod_i A_i = D$, and each $A_i$ consists of $2^{i-1}$ block matrices on the diagonal of dimension $n / 2^{i-1}$ and degree $d 2^{i-1}$. So while the degree of these blocks doubles, the dimension is halved, which implies that all these $A_i$ can be computed in just $\tilde{O}(dn^\omega)$ 
operations.

Observe that the inverse $B^{-1}$ can be written as $\prod_i A_i D^{-1}$. Also, $D^{-1}$ is a diagonal matrix, and so is easily invertible, i.e. we can write the entries of the inverse in the form of rationals $p/q$ where both $p$ and $q$ are of degree $O(dn)$.
Therefore, we can represent the adjoint via $\adj(B) = B^{-1} \det(B) = \prod_i A_i D^{-1} \det(B)$

To compute $v^\top \adj(B)$ for any degree $d$ vector $v$ in $\tilde{O}(dn^2)$ operations, we must compute $v^\top \prod_i A_i \det(B) D^{-1}$ from left to right. Each vector matrix product with some $A_i$ has degree $d2^{i-1}$ but at the same time the dimension of the diagonal blocks is only $n / 2^{i-1}$, hence each product requires only $\tilde{O}(dn^2)$ field operations. Scaling by $\det(B)$ and dividing by the entries of $D$ also requires only $O(dn^2)$ as their degrees are bounded by $O(dn)$. This gives us \Cref{thm:kernelBasePreprocessing}.

The idea of \Cref{thm:extendedKernelBasePreprocessing} is to explicitly precompute a prefix $P = \prod_{i\le k} A_i$ from the factors of $\prod_i A_i \det(B) D^{-1}$. This increases the preprocessing time but at the same time allows us to compute 
$\adj(B)_{i,j} = \vec{e}_i^\top P \prod_{i>k} A_i \det(B) D^{-1} \vec{e}_j$ faster.

\subsection{Comparison with previous works}
\label{sub:compare}

\paragraph{Previous dynamic matrix algorithms.}

In contrast to algorithms for sensitive oracles that handle a single batch of updates, \emph{dynamic} algorithms must handle an (infinite) sequence of updates. The techniques we used for our sensitive distance/reachability oracles are motivated from techniques developed for dynamic algorithms which we will discuss below.

There is a line of work initiated by Sankowski \cite{Sankowski04,Sankowski07,BrandNS18}
on maintaining inverse or adjoint of a dynamic matrix whose entries
are field elements, not polynomials as in our setting. Let us call
such matrix a \emph{non-polynomial matrix}. By the similar reductions
for obtaining applications on weighted graphs in this paper, dynamic
non-polynomial matrix algorithms imply solutions to many dynamic algorithms
on \emph{unweighted} graphs. 

Despite the similarity of the results and applications, there is
a sharp difference at the core techniques of our algorithm for polynomial
matrices and the previous algorithms for non-polynomial matrices.
The key to all our results is fast preprocessing time. By using
the kernel basis decomposition \cite{ZhouLS15}, we do not need to
explicitly write down the adjoint of a polynomial matrix, which takes
$\Omega(dn^{3})$ operations if the matrix has size $n\times n$ and
degree $d$. This technique is specific for polynomial matrix and
does not have a meaningful counterpart for a non-polynomial matrix. 
On the contrary, algorithms for non-polynomial matrices from \cite{Sankowski04,Sankowski07,BrandNS18}
just preprocess a matrix in a trivial way. That is, they compute the
inverse and/or adjoint explicitly in $O(n^{\omega})$ operations.
Their key contribution is how to handle update in $o(n^{2})$ operations.

We remark that Sankowski \cite{Sankowski05} did obtain a dynamic
polynomial matrix algorithm by extending previous dynamic non-polynomial
matrix algorithms. However, there are two limitations to this approach.
First, the algorithm requires a matrix of the form $(\I-X\cdot A)$
where $A\in\F[X]^{n\times n}$. This restriction excludes some applications
including distances on graphs with zero or negative weights %and matchings
because we cannot use the reduction \Cref{lem:shortestDistanceReduction}.
Second, the cost of the algorithm is multiplied by the degree of the
adjoint matrix which is $O(dn)$ if $A$ has degree $d$. Hence, just
to update one entry, this takes $O(dn\times n^{1.407})$ operations\footnote{The current best algorithm \cite{BrandNS18} takes $O(n^{1.407})$ operations
	to update one entry of a non-polynomial matrix.}.
This is already slower than the time for computing from scratch an
entry of adjoint/inverse $\tilde{O}(dn^{\omega})$ using static algorithms\footnote{The first limitation explains why there is only one application in
	\cite{Sankowski05}, which is to maintain distances on unweighted
	graphs. To bypass the second limitation, Sankowski \cite{Sankowski05}
	``forces'' the degrees to be small by executing all arithmetic operations
	under modulo $X^{k}$ for some small $k$. A lot of information about
	the adjoint is lost from doing this. However, for his specific application,
	he can still return the queried distances by combining with other
	graph-theoretic techniques. }.

\paragraph{Previous sensitive distance oracles.}

Previous sensitive distance oracles such as \cite{WeimannY13,GrandoniW12} also use fast matrix-multiplication, but only use it for computing a fast min-plus matrix product in a black box manner. All further techniques used by these algorithms are graph theoretic.

Our shift from graph-theoretic techniques to a purely algebraic algorithm is the key 
that enables us to support large sets of updates. 
Let us explain why previous techniques can inherently handle only small number of deletions.
Their main idea is to sample many smaller subgraphs in the preprocessing. 
To answer a query in the updated graph, their algorithms simply look for a subgraph $H$ where 
(i) all deleted edges were not even in $H$ from the beginning, and
(ii) all edges in the new shortest path are in $H$. 
To argue that $H$ exists with a good probability, the number of deletions cannot be more than $\log n$ where $n$ is the number of nodes.
That is, these algorithms do not really re-compute the new shortest paths, instead they pre-compute subgraphs that ``avoid'' the updates.

Purely algebraic algorithms such as ours (and also \cite{Sankowski05,BrandNS18,BrandN19}) can overcome the limit on deletions naturally. For an intuitive explanation consider the following simplified example for unweighted graphs: Let $A$ be the adjacency matrix of an unweighted graph, then the polynomial matrix $(\I-X \cdot A)$ has the following inverse when considering the field of formal power series: $(\I - X \cdot A)^{-1} = \sum_{k\ge0} X^k A^k$ (this can be seen by multiplying both sides with $\I - X \cdot A$). This means the coefficient of $X^k$ of $(\I - X \cdot A)^{-1}_{i,j}$ is exactly the number of walks of length $k$ from $i$ to $j$. So the entry $(\I - X \cdot A)^{-1}_{i,j}$ does not just tell us the distance between $i$ and $j$, the entry actually encodes \emph{all possible walks} from $i$ to $j$. Thus finding a replacement path, when some edge is removed, becomes very simple because the information of the replacement path is already contained in entry $(\I - X \cdot A)^{-1}_{i,j}$. The only thing we are left to do is to remove all paths from $(\I - X \cdot A)^{-1}_{i,j}$ that use any of the removed edges. This is done via cancellations caused by applying the Sherman-Morrison formula.

Our algorithm exploits the adjoint instead of the inverse, but the interpretation is similar since for invertible matrices the adjoint is just a scaled inverse: $\adj(M) = M^{-1} \det(M)$. We also do not perform the computations over $\Z[X]$, but $\F[X]$ to bound the required bit-length to represent the coefficients.

\subsection{Organization}

We first introduce relevant notations, definitions, and some known reductions in \Cref{sec:preliminaries}.
We construct the adjoint oracles from \Cref{thm:kernelBasePreprocessing,thm:extendedKernelBasePreprocessing} based on kernel basis decomposition in \Cref{sec:inverseOracle}.
Finally, we %put everything together and
show our algorithms for maintaining adjoint of polynomial matrices in \Cref{sec:dyn_adjoint}, where we will also apply the reductions to get our distance and reachability oracles \Cref{thm:sense st,thm:reach}.

\section{Preliminaries}
\label{sec:preliminaries}

\paragraph{Complexity Measures}

Most of our algorithms work over any field $\F$ and their complexity is measured in the number of arithmetic operations performed over $\F$, i.e. the \emph{arithmetic complexity}. This does not necessarily equal the \emph{time complexity} of the algorithm as one arithmetic operation could require more than $O(1)$ time, e.g. very large rational numbers could require many bits for their representation. This is why our algebraic lemmas and theorems will always state ``in $O( \cdot)$ operations" instead of ``in $O( \cdot )$ time".

For the graph applications however, when having an $n$ node graph, we will typically use the field $\Z_p$ for some prime $p$ of order $n^c$ for some constant $c$. This means each field element requires only $O(\log n)$ bits to be represented and all field operations can be performed in $O(1)$ time in the standard model (or $\tilde{O}(1)$ bit-operations).

\paragraph{Notation: Identity and Submatrices}
The identity matrix is denoted by $\I$.

Let $I, J \subset [n] := \{1,...,n\}$ and $A$ be a $n \times n$ matrix, then the term $A_{I,J}$ denotes the submatrix of $A$ consisting of the rows $I$ and columns $J$. For some $i \in [n]$ we may also just use the index $i$ instead of $\{i\}$. The term $A_{[n],i}$ thus refers to the $i$th column of $A$.

\paragraph{Matrix Multiplication}

We denote with $O(n^\omega)$ the arithmetic complexity of multiplying two $n \times n$ matrices.
Currently the best bound is $\omega < \matrixExponent$ \cite{Gall14a,Williams12}.

\paragraph{Polynomial operations}

Given two polynomials $p, q \in \F[X]$ with $\deg(p), \deg(q) \le d$, we can add and subtract the two polynomials in $O(d)$ operations in $\F$. We can multiply the two polynomials in $O(d \log d)$ using fast-fourier-transformations, likewise dividing two polynomials can be done in $O(d\log d)$ as well \cite[Section 8.3]{AhoHU74}. Since we typically hide polylog factors in the $\tilde{O}( \cdot )$ notation, all operations using degree $d$ polynomials from $\F[X]$ can be performed in $\tilde{O}(d)$ operations in $\F$.

\paragraph{Polynomial Matrices}

We will work with polynomial matrices/vectors, so matrices and vectors whose entries are polynomials.
We define for $M \in \F[X]^{n \times m}$ the degree $\deg(M) := \max_{i,j} \deg(M_{i,j})$. Note that a polynomial matrix $M \in \F[X]^{n \times n}$ with $\det(A) \neq 0$ might not have an inverse in $\F[X]^{n\times n}$ as $\F[X]$ is a ring. However, the inverse $M^{-1}$ does exist in $\F(X)^{n \times n}$ where $\F(X)$ is the field of rational functions.

\paragraph{Adjoint of a Matrix}

The adjoint of an $n \times n$ matrix $M$ is defined as $\adj(M)_{i,j} = (-1)^{i+j} \det(M_{[n]\setminus j, [n] \setminus i})$.
In the case that $M$ has non-zero determinant, we have $\adj(M) = \det(M) \cdot M^{-1}$.
Note that in the case of $M$ being a degree $d$ polynomial matrix, we have $\adj(M) \in \F[X]^{n \times n}$ and $\deg(\adj(M)) < nd$.

\paragraph{Graph properties from polynomial matrices}

Polynomial matrices can be used to obtain graph properties such as the distance between any pair of nodes:

\begin{lemma}[{\cite[Theorem 5 and Theorem 7]{Sankowski05ESA}}]\label{lem:shortestDistanceReduction}
	
	Let $\F := \Z_p$ be a field of size $p \sim n^c$ for some constant $c > 1$ and let $G$ be a graph with $n$ nodes and integer edge weights $(c_{i,j})_{1\le i,j \le n} \in [-W, W]$.
	
	Let $A \in \F[X]^{n \times n}$ be a polynomial matrix, where $A_{i,i} = X^W$ and $A_{i,j} = a_{i,j} X^{W+c_{i,j}}$ and each $a_{i,j} \in \F$ is chosen independently and uniformly at random.
	
	\begin{itemize}
		\item If $G$ contains no negative cycle, then the smallest degree of the non-zero monomials of $\adj(A)_{i,j}$ minus $W(n-1)$ is the length of the shortest path from $i$ to $j$ in $G$ with probability at least $1-n^{1-c}$.
		
		\item Additionally with probability at least $1-n^{1-c}$, the graph $G$ has a negative cycle, if and only if $\det(A)$ has a monomial of degree less than $Wn$.
	\end{itemize}

\end{lemma}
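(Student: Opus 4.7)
The plan is to prove both statements by expanding determinants via the Leibniz formula and interpreting each permutation combinatorially as a disjoint union of a path (in the adjoint case) and cycles in $G$, with fixed points contributing only the diagonal $X^W$ factor. The randomness of the $a_{i,j}$'s is then used, via the Schwartz--Zippel lemma, to rule out accidental cancellation of the minimum-degree terms.

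For the determinant claim, I would first write
\[
\det(A) = \sum_{\sigma \in S_n} \sign(\sigma) \prod_{i=1}^n A_{i,\sigma(i)}.
\]
Decomposing $\sigma$ into fixed points and non-trivial cycles, each fixed point of $\sigma$ contributes an $X^W$ factor, while each non-trivial cycle $C$ corresponds to a directed simple cycle in $G$ and contributes $X^{W|C| + w(C)} \cdot \prod_{(k,\sigma(k)) \in C} a_{k,\sigma(k)}$, where $w(C)$ is its total edge weight. Hence each term has degree $Wn + \sum_C w(C)$. If $G$ has no negative cycle, every collection of vertex-disjoint simple cycles has non-negative total weight, so every term has degree $\ge Wn$ deterministically, giving one direction. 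Conversely, if $G$ has a negative cycle $C^\ast$, then $\sigma = C^\ast$ with all other vertices fixed produces a term of degree $Wn + w(C^\ast) < Wn$; the coefficient of any minimum-degree monomial is a sum over $\sigma$'s with that precise weight, where distinct $\sigma$'s give distinct monomials in the $a_{i,j}$'s (since $\sigma$ is reconstructible from its edge-set). This sum is a nonzero polynomial of degree $\le n$ in the $a_{i,j}$'s, so by Schwartz--Zippel it is nonzero at a random evaluation with probability at least $1 - n/|\F| \ge 1 - n^{1-c}$.

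For the adjoint claim, I would use $\adj(A)_{i,j} = (-1)^{i+j}\det(A_{[n]\setminus j,\,[n]\setminus i})$ and expand the right-hand side as a sum over bijections $\tau : [n]\setminus\{j\} \to [n]\setminus\{i\}$. Each such $\tau$ extends uniquely to a permutation $\sigma$ of $[n]$ with $\sigma(j)=i$; the cycle of $\sigma$ through $j$ therefore decomposes, after removing the artificial edge $j \to i$, into a simple directed path $P$ from $i$ to $j$ in $G$, and the remaining non-trivial cycles of $\sigma$ form a collection of vertex-disjoint cycles on vertices outside $P$. This gives a term of degree
\[
W(n-1) + w(P) + \sum_C w(C).
\]
Under the no-negative-cycle assumption, $\sum_C w(C) \ge 0$ always, so the minimum achievable degree is exactly $W(n-1) + \mathrm{dist}(i,j)$, realized by taking $P$ to be the shortest $i$-to-$j$ path and all non-path vertices as fixed points. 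The coefficient of $X^{W(n-1)+\mathrm{dist}(i,j)}$ is again a sum of distinct monomials in the $a_{i,j}$'s (one per choice of $(P,\{C\})$ achieving the minimum), hence a nonzero polynomial of degree $\le n$; by Schwartz--Zippel this coefficient is nonzero with probability $\ge 1 - n^{1-c}$, which exactly says that the minimum-degree monomial of $\adj(A)_{i,j}$ has degree $W(n-1) + \mathrm{dist}(i,j)$.

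The main obstacle is the Schwartz--Zippel step: we must verify that different permutations $\sigma$ contributing to the same minimum-degree coefficient yield \emph{distinct} monomials in the $a_{i,j}$'s, so that genuine cancellation cannot occur identically. This reduces to the observation that a permutation is determined by its non-fixed-point edge set, hence different $\sigma$'s index different squarefree monomials $\prod a_{k,\sigma(k)}$, making the coefficient polynomial in the $a$'s a nonzero sum of pairwise-distinct monomials. The remaining bookkeeping, including the case $i=j$ where no ``path'' appears and $\sigma$ is just a permutation of $[n]\setminus\{i\}$, follows from exactly the same cycle-decomposition argument.
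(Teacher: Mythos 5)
The paper does not actually prove this lemma; it is cited verbatim from Sankowski \cite{Sankowski05ESA} (Theorems 5 and 7) and used as a black-box reduction, so there is no in-paper argument to compare against. Your proposal is nevertheless the standard proof of such adjoint-based distance encodings and, to my knowledge, matches Sankowski's original argument in its essentials: expand $\det$ and the cofactors via the Leibniz formula, identify each contributing permutation with a vertex-disjoint collection of cycles (respectively, with an $i$-to-$j$ simple path plus vertex-disjoint cycles off the path), observe that each fixed point contributes exactly $X^W$ so the term degree is $Wn + \sum_C w(C)$ (resp.\ $W(n-1)+w(P)+\sum_C w(C)$), and handle possible cancellation of the bottom-degree coefficient by Schwartz--Zippel.

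The step that requires the most care is the injectivity argument, which you state correctly: a permutation is reconstructible from the set of its non-fixed-point arcs, so distinct contributing permutations produce distinct squarefree monomials in the $a_{k,\ell}$'s. Hence the bottom-degree coefficient is a nonzero multilinear polynomial of degree at most $n$, and Schwartz--Zippel bounds the failure probability by $n/p \le n^{1-c}$ as desired. Two small gaps worth closing in a full write-up: (i) you should note explicitly that $A_{k,\ell}=0$ whenever $(k,\ell)$ is not an edge of $G$, so only permutations whose non-fixed arcs all correspond to actual edges contribute nonzero terms --- this is what makes the combinatorial interpretation valid; and (ii) the first bullet is implicitly restricted to pairs with $j$ reachable from $i$, since otherwise $\adj(A)_{i,j}\equiv 0$ and there is no minimum-degree monomial (your argument does handle this case correctly, since no $\tau$ produces a nonzero term, but it is worth stating).
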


\section{Adjoint Oracle}
\label{sec:inverseOracle}

In this section we will outline how the adjoint oracle \Cref{thm:kernelBasePreprocessing} by \cite{ZhouLS15} can be extended to our \Cref{thm:extendedKernelBasePreprocessing}.

Unfortunately this new result is not a blackbox reduction, instead we have to fully understand and exploit the properties of the algorithm presented in \cite{ZhouLS15}.
This is why a formally correct proof of \Cref{thm:extendedKernelBasePreprocessing} requires us to repeat many definitions and lemmas from \cite{ZhouLS15}. Such a formally correct proof can be found in subsection \ref{app:adjointOracle}. We will start with a high level description based on the high level idea of \Cref{thm:kernelBasePreprocessing} presented in \Cref{sec:overview}.

\subsection{Extending the Oracle to Element Queries}
\label{sub:highLevelAdjoint}

We will now outline how the data-structure of kernel-bases, presented in \Cref{sec:overview}, can be used for faster element queries to $\adj(B)$.
Remember that \Cref{thm:kernelBasePreprocessing} was based on representing $\adj(B) = \prod_{i=1}^{ \log n } A_i D^{-1} \det(B)$, where each $A_i$ consists of $2^{i-1}$ diagonal blocks of size $n/2^{i-1} \times n / 2^{i-1}$ and is of degree $d2^{i-1}$.

The idea for \Cref{thm:extendedKernelBasePreprocessing} is very simple: Choose some $0 \le \mu \le 1$ and $k$ such that $2^k = n^\mu$, then during the pre-processing compute the kernel-base decomposition $\adj(B) = \prod_i A_i D^{-1} \det(B)$ and pre-compute the product $M := \prod_{i=1}^k A_i$ explicitly.
When an entry $(i,j)$ of $\adj(B)$ is required, we only have to compute $\vec{e}_i^\top M \prod_{i>k} A_i \vec{e}_j D_{j,j}^{-1} \det(B)$.

\paragraph{Complexity of the Algorithm}

Let $A_i^{[j]}$ be the $(n/2^{i-1}) \times (n/2^{i-1})$ matrix obtained when setting all of $A_i$ to 0, except for the diagonal block that includes the $j$th column. We will now argue, that $\vec{e}_i^\top M \prod_{r>k} A_r \vec{e}_j = \vec{e}_i^\top M \prod_{r>k} A_r^{[j]} \vec{e}_j$. This equality can be seen by computing the product from right to left:

\begin{itemize}
\item Consider the right-most product $A_{\log n} \vec{e}_j$. The vector $\vec{e}_j$ is non-zero only in the $j$th row, so only the $j$th column of $A_{\log n}$ matters, hence $A_{\log n} \vec{e}_j = A_{\log n}^{[j]} \vec{e}_j$. 

\item Consider the product $A_i A_{i+1}^{[j]}$. The matrix $A_{i+1}^{[j]}$ has few non-zero rows, so most columns of $A_i$ will be multiplied by zero and we thus most entries of $A_i$ do not matter for computing the product. Note that all entries of $A_i$ that \emph{do} matter (i.e. are multiplied with non-zero entries of $A_{i+1}^{[k]}$) are inside the block $A_i^{[j]}$, because of the recursive structure of the matrices (i.e. the blocks of $A_{i+1}$ are obtained by splitting the blocks of $A_i$), see for instance \Cref{fig:oracle}. This leads to $A_i A_{i+1}^{[j]}=A_i^{[j]} A_{i+1}^{[j]}$.
\end{itemize}
By induction we now have $\vec{e}_i^\top M \prod_{r>k} A_r \vec{e}_j = \vec{e}_i^\top M \prod_{r>k} A_r^{[j]} \vec{e}_j$.

\begin{figure}
\center
% left bottom right top
\includegraphics[trim={150 150 170 100},clip,scale=0.5]{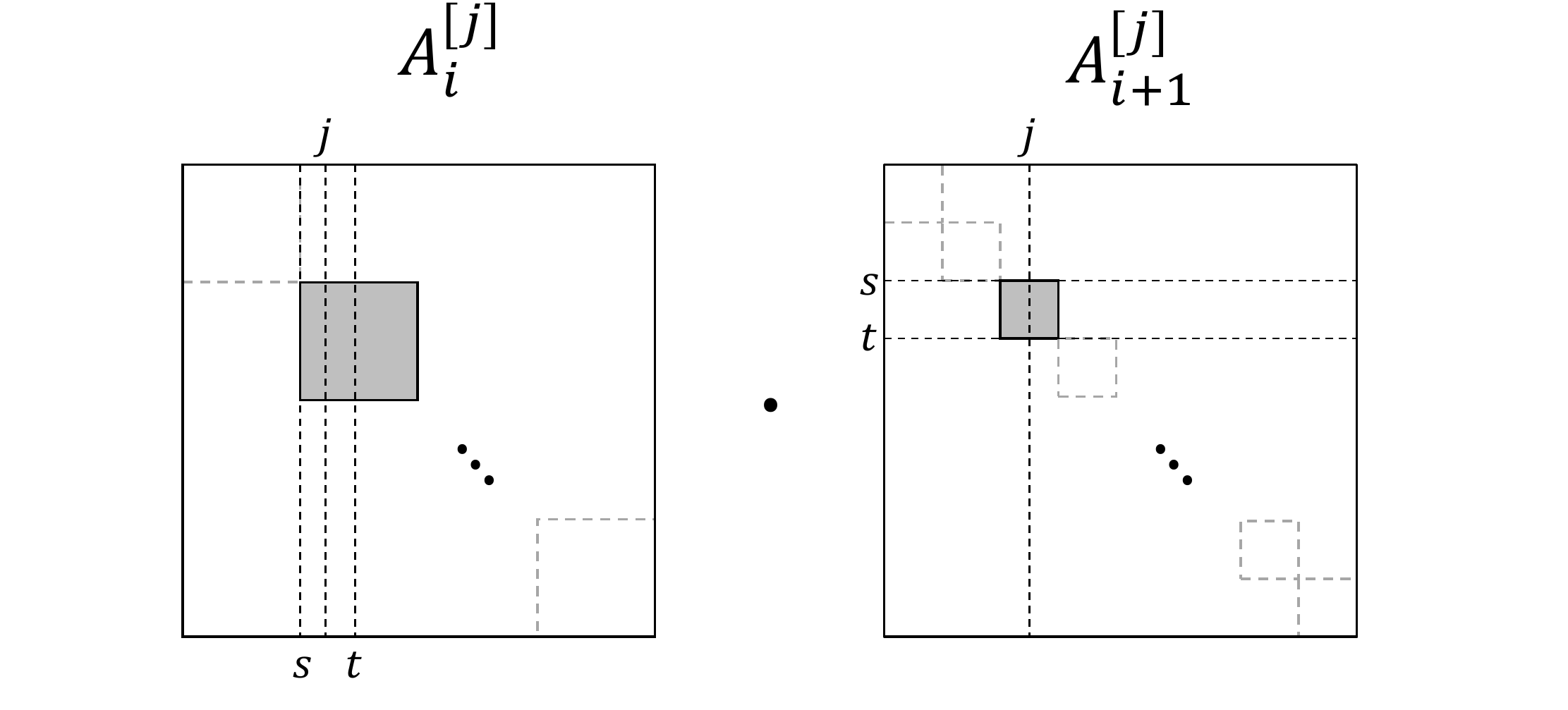}
\caption{\label{fig:oracle}
Dark grey boxes represent non-zero entries. The grey dotted squares represent non-zero entries of $A_i$ that are set to zero in $A_i^{[j]}$. Here we see that only rows with index $s,...,t$ of $A_{i+1}^{[j]}$ are non-zero, so when computing $A_i A_{i+1}^{[j]}$ only the columns of $A_i$ with index $s,...,t$ need to be considered. These columns are identical in $A_i$ and $A_i^{[j]}$ so $A_i A_{i+1}^{[j]} = A_i^{[j]}A_{i+1}^{[j]}$.}
\end{figure}

The complexity of computing this product is very low, when multiplied from left to right.
Consider the first products $\vec{e}_i M A_{k+1}^{[j]}$. The degree of matrix $M$ and $A_{k+1}^{[j]}$ are both bounded by $O(dn2^k)$. The matrix $A_{k+1}^{[j]}$ is 0 except for a $(n / 2^k) \times (n / 2^k)$ block on the diagonal. Hence, this first product requires $\tilde{O}(d2^k (n/2^k)^2)$ field operations.

All products after this require fewer operations: On one hand the degree of vector and matrix double after each product, on the other hand the dimension of the non-zero block of $A_r^{[j]}$ is halved. Since the complexity of the vector matrix product scales linearly in the degree but quadratic in the dimension, the complexity is bounded by the initial product $\vec{e}_i M A_{k+1}^{[j]}$.
The query complexity is thus $\tilde{O}(d2^k (n / 2^k)^2) = \tilde{O}(dn^{2-\mu})$.

This is only a rough simplification of how the algorithm works. For instance the degrees of the $A_1,...A_{\log n}$ are not simple powers of 2, instead only the average degree is bounded by a power of 2. Likewise the dimension $n$ and the size of the diagonal blocks do not have to be a power of two.

\subsection{Formal Proof of the Adjoint Element Oracle}
\label{app:adjointOracle}

Before we can properly prove our \Cref{thm:extendedKernelBasePreprocessing}, we first have to define/cite some terminology and lemmas from \cite{ZhouLS15}, as our \namecref{thm:extendedKernelBasePreprocessing} is heavily based on their result.

First we will define the notation of shifted column degrees. Shifted column degrees can be used to formalize how the degree of a vector changes when multiplying it with a polynomial matrix.

\begin{definition}[{\cite[Section 2.2]{ZhouLS15}}]
Let $\F$ be some field, $M \in \F[X]^{n \times m}$ be some polynomial matrix and let $\vec{s} \in \N^n$ be some vector.

Then the $\vec{s}$-shifted column degrees of $M$ is defined via:
$$
\cdeg_{\vec{s}}(M)_j := \max_{i=1,...,n} \vec{s}_i + \deg(M_{i,j}) \text{ for }j=1,...,m
$$
\end{definition}

\begin{lemma}\label{lem:vectorMatrixComplexity}
Let $\F$ be some field, $M \in \F[X]^{n \times m}$ be some polynomial matrix and let $\vec{s} \in \N^n$ be some vector. Further let $\vec{v} \in \F[X]^n$ be a polynomial vector where $\deg(\vec{v}_i) \le \vec{s}_i$ for $i=1,...,n$.

Then $\deg((\vec{v}^\top M)_j) \le \cdeg_{\vec{s}}(M)_j$ and $\vec{v}^\top M$ can be computed in $\tilde{O}(n \sum_{j=1}^m \cdeg_{\vec{s}}(M)_j)$.
\end{lemma}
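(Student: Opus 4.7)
The plan is to prove both claims directly from the definition of the vector-matrix product, $(\vec{v}^\top M)_j = \sum_{i=1}^n \vec{v}_i \, M_{i,j}$, without any cleverness. Both claims follow essentially by unwrapping the shifted column degree.

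For the degree bound, I would observe that each summand $\vec{v}_i M_{i,j}$ is a product of polynomials of degrees at most $\vec{s}_i$ and $\deg(M_{i,j})$ respectively, hence has degree at most $\vec{s}_i + \deg(M_{i,j})$. Taking the maximum over $i$ gives exactly $\cdeg_{\vec{s}}(M)_j$ by definition, and the degree of a sum is at most the maximum degree of the summands, so $\deg((\vec{v}^\top M)_j) \le \cdeg_{\vec{s}}(M)_j$.

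For the complexity bound, I would fix a column $j$ and estimate the cost of forming $\sum_{i=1}^n \vec{v}_i M_{i,j}$. Each individual polynomial multiplication $\vec{v}_i M_{i,j}$ produces a polynomial of degree at most $\cdeg_{\vec{s}}(M)_j$, and by the FFT-based polynomial arithmetic noted in the preliminaries, it can be carried out in $\tilde{O}(\cdeg_{\vec{s}}(M)_j)$ field operations (the logarithmic factor being absorbed into $\tilde{O}$). Accumulating the $n$ products into a running sum, where the running sum itself also has degree at most $\cdeg_{\vec{s}}(M)_j$, costs another $\tilde{O}(n\,\cdeg_{\vec{s}}(M)_j)$ additions. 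Thus the total cost per column is $\tilde{O}(n\,\cdeg_{\vec{s}}(M)_j)$.

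Summing over $j = 1, \dots, m$ yields the claimed $\tilde{O}\bigl(n \sum_{j=1}^m \cdeg_{\vec{s}}(M)_j\bigr)$ bound. There is no real obstacle here — the key conceptual point is simply that the shifted column degree $\cdeg_{\vec{s}}(M)_j$ is precisely the natural bound on both the output degree and the size of the intermediate products when the input vector has degree profile bounded by $\vec{s}$, which is exactly what the definition of $\cdeg_{\vec{s}}$ is engineered to capture.
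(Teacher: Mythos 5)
Your proof is correct and follows essentially the same approach as the paper's: bound the cost of each scalar polynomial multiplication $\vec{v}_i M_{i,j}$ and sum over all $(i,j)$, using the definition of $\cdeg_{\vec{s}}$ to collapse the double sum. The only cosmetic difference is that you bound each product directly by $\cdeg_{\vec{s}}(M)_j$ while the paper keeps $\deg(\vec{v}_i)+\deg(M_{i,j})$ until the final step; you also spell out the (easy) degree bound which the paper leaves implicit.
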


\begin{proof}
Multiplying $\vec{v}_i M_{i,j}$ requires $\tilde{O}(\deg(\vec{v}_i)+\deg(M_{i,j}))$ field operations. Hence the total cost becomes $$
\sum_{j=1}^m \sum_{i=1}^n \tilde{O}\left(\deg(\vec{v}_i)+\deg(M_{i,j})\right)
\le \tilde{O}\left(\sum_{j=1}^m \sum_{i=1}^n \vec{s}_i+\deg(M_{i,j})\right)
\le \tilde{O}\left(n \sum_{j=1}^m \cdeg_{\vec{s}}(M)_j \right).
$$
\end{proof}

We will now give a formal description of the data-structure constructed in \cite{ZhouLS15}. The following definitions and properties hold throughout this entire section.

Let $B \in \F[X]^{n \times n}, \vec{s} \in \N^n$ such that $\cdeg_{\vec{0}}(B)_j \le \vec{s}_j$, so $\vec{s}_j$ bounds the maximum degree in the $j$th column of $M$ (also called the column degree of $M$). Let $d := \sum_i \vec{s}_i / n$ be the average column degree of $B$.

In \cite{ZhouLS15} they construct in $\tilde{O}(dn^\omega)$ field operations a chain of matrices $A_1,...,A_{\lceil \log n \rceil} \in \F[X]^{n \times n}$ and a diagonal matrix $D \in \F[X]^{n \times n}$ such that $\adj(B) = (\prod_i A_i) \det(B) D^{-1}$.

Here the matrices $(A_{i+1})_{i=0...\lceil \log n \rceil-1}$ are block matrices consisting each of $2^i$ diagonal blocks, i.e. 
$$A_{i+1} = \diag(A_{i+1}^{(1)},...,A_{i+1}^{(2^i)}).$$
The number of rows/column of each $A_{i+1}^{(k)}$ is $n/2^i$ upto a factor of 2. (Note that $A_{i}^{(k)}$ refers to the $k$th block on the diagonal of $A_i$, not to be confused with our earlier definition of $A_i^{[j]}$ in subsection \ref{sub:highLevelAdjoint}).

Remember from the overview (\Cref{sec:overview}) that each $A_{i}^{(k)}$ consists of two kernel bases, so each of these diagonal block matrices consists in turn of two matrices (kernel bases)
$$A_{i+1}^{(k)} = [N_{i+1,l}^{(k)},N_{i+1,r}^{(k)}].$$
Here $l$ and $r$ are not variables but denote the \emph{left} and \emph{right} submatrix.

We also write $M_i$ for the partial product $M_i := \prod_{k=1}^i A_k$. Each $M_i$ can be decomposed into 
$$M_i = [M_i^{(1)},...,M_i^{(2^i)}]$$
where each $M_i^{(k)}$ has $n$ rows and the number of columns in $M_i^{(k)}$ corresponds to the number of columns in $A_{i+1}^{(k)}$. We can compute $M_i$ as follows:
\begin{align}
M_{i+1}^{(2k-1)} = M_i^{(k)} N_{i+1,l}^{(k)} \text{ and } M_{i+1}^{(2k)} = M_i^{(k)}N_{i+1,r}^{(k)} \label{eqn:computeM}
\end{align}

We have the following properties for the degrees of these matrices:

\begin{lemma}[Lemma 10 in \cite{ZhouLS15}]\label{lem:degreeBound}
Let $m \times m$ be the dimension of $M_i^{(k)}$, let $m_l$ and $m_r$ be the number of columns in $N_{i+1,l}^{(k)}$ and $N_{i+1,r}^{(k)}$ respectively and let $\vec{t} := \cdeg_{\vec{s}}(M_i^{(k)})$, then
\begin{itemize}
\item $\sum_{j=1}^m \vec{t}_j \le \sum_{j=1}^n \vec{s}_j = dn$, and
\item $\sum_{j=1}^{m_l} \cdeg_{\vec{t}}(N_{i+1,l}^{(k)})_j \le \sum_{j=1}^n \vec{s}_j = dn$ and $\sum_{j=1}^{m_r} \cdeg_{\vec{t}}(N_{i+1,r}^{(k)})_j \le \sum_{j=1}^n \vec{s}_j = dn$\\
(which also implies $\sum_{j=1}^m \cdeg_{\vec{t}}(A_{i+1}^{(k)})_j \le 2dn$)
\end{itemize}
\end{lemma}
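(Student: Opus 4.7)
The plan is to prove both bullets simultaneously by induction on the level $i$, relying on two standard facts from the theory of shifted polynomial matrices: (a) \emph{subadditivity of shifted column degree under multiplication}, namely that for compatible polynomial matrices $P,Q$ and any shift $\vec{s}$, one has $\cdeg_{\vec{s}}(PQ)_j \le \cdeg_{\vec{u}}(Q)_j$ for every column $j$, where $\vec{u} := \cdeg_{\vec{s}}(P)$ (a direct consequence of $\deg(pq)\le \deg(p)+\deg(q)$ combined with the definition of $\cdeg$); and (b) the \emph{degree-sum bound for minimal shifted kernel bases}, namely that if $N$ is a $\vec{t}$-minimal kernel basis of some polynomial matrix $F$, then $\sum_j \cdeg_{\vec{t}}(N)_j$ is bounded by the sum of the $\vec{t}$-shifted column degrees of $F$. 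Fact (b) is the key structural property of minimal kernel bases used throughout \cite{ZhouLS15}.

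For the base case $i = 0$, we have $M_0 = \I_n$, so $\vec{t} = \vec{s}$ and $\sum_j \vec{t}_j = \sum_j \vec{s}_j = dn$ by the definition of $d$. The level-$1$ bases $N_{1,l}^{(1)}$ and $N_{1,r}^{(1)}$ are constructed as $\vec{s}$-minimal kernel bases of submatrices of $B$, whose $\vec{s}$-shifted column-degree sum is bounded by $\sum_j \vec{s}_j = dn$; applying (b) with shift $\vec{s}$ gives the second bullet.

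For the inductive step, assume both bullets hold at level $i$. Using equation~(\ref{eqn:computeM}), $M_{i+1}^{(2k-1)} = M_i^{(k)} N_{i+1,l}^{(k)}$, so fact (a) with $\vec{t} := \cdeg_{\vec{s}}(M_i^{(k)})$ yields
\[
\cdeg_{\vec{s}}\bigl(M_{i+1}^{(2k-1)}\bigr)_j \;\le\; \cdeg_{\vec{t}}\bigl(N_{i+1,l}^{(k)}\bigr)_j
\]
for every column $j$. Summing over $j$ and invoking the second bullet of the inductive hypothesis at level $i$ gives $\sum_j \cdeg_{\vec{s}}(M_{i+1}^{(2k-1)})_j \le dn$; the block $M_{i+1}^{(2k)}$ is handled symmetrically via $N_{i+1,r}^{(k)}$. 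Since the column index sets of the blocks $M_{i+1}^{(k')}$ partition $[n]$, summing over all $k'$ preserves the total bound $dn$, proving the first bullet at level $i+1$. The second bullet at level $i+1$ then follows by applying fact (b) to the kernel computation that produces $N_{i+2,l}^{(k')}$ and $N_{i+2,r}^{(k')}$ under shift $\cdeg_{\vec{s}}(M_{i+1}^{(k')})$, whose column sum we have just bounded by $dn$.

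The main obstacle is not the induction itself but the bookkeeping: at each level of the \cite{ZhouLS15} recursion I must correctly identify which polynomial matrix is being kernel-based and under which shift, so that facts (a) and (b) can be applied with matching data. Once this indexing is pinned down, the induction closes mechanically, and this is essentially the line of argument used to establish Lemma~10 of \cite{ZhouLS15} in its original source.
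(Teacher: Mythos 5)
The paper gives no proof of this lemma at all --- it is cited verbatim as Lemma~10 of \cite{ZhouLS15} --- so there is no internal proof to compare against; you are reconstructing an argument from the external source.

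Your induction skeleton (use fact~(a) on $M_{i+1}^{(k')} = M_i^{(k)} N_{i+1,\cdot}^{(k)}$ to pass the second bullet at level $i$ to the first bullet at level $i+1$, then a kernel-basis degree bound to recover the second bullet at level $i+1$) is the right shape, and fact~(a) is stated and used correctly. However, your fact~(b) is misstated and, as written, does not close the induction. The phrase ``the sum of the $\vec{t}$-shifted column degrees of $F$'' is not well-formed: in $FN = 0$ the shift $\vec{t}$ indexes the \emph{columns} of $F$ (equivalently the rows of $N$), so $\cdeg_{\vec{t}}(F)$ is not defined unless $F$ is square, which it never is in this recursion ($F$ is a proper row-block). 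The correct statement is: if $\cdeg(F) \le \vec{t}$ entrywise, then a $\vec{t}$-minimal kernel basis $N$ of $F$ satisfies $\sum_j \cdeg_{\vec{t}}(N)_j \le \sum_j \vec{t}_j$ --- the bound is $\sum_j \vec{t}_j$, not any shifted column degree of $F$. Your last sentence (``whose column sum we have just bounded by $dn$'') actually appeals to this corrected version rather than the one you wrote. You also leave unverified the hypothesis $\cdeg(F') \le \vec{t}'$ at level $i+1$ for the row-block $F'$ of $B M_{i+1}^{(k')}$ being kernel-based; this follows from fact~(a) applied to $B M_{i+1}^{(k')}$ together with $\cdeg(B) \le \vec{s}$, but it must be stated for the invocation of fact~(b) to be legitimate. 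With fact~(b) corrected and that hypothesis discharged, your induction does close, and it is plausibly the argument used in \cite{ZhouLS15}.
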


\begin{lemma}[Lemma 11 in \cite{ZhouLS15}]\label{lem:computeMcomplexity}

For a given $i$ and $k$ the matrix multiplications in \eqref{eqn:computeM} can be done in $\tilde{O}(n(n/2^i)^{\omega-1}(1+d 2^i)))$ field operations.

\end{lemma}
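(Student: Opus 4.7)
The plan is to compute each product $M_i^{(k)} N_{i+1,l}^{(k)}$ (and the symmetric $M_i^{(k)} N_{i+1,r}^{(k)}$) by slicing the long dimension of $M_i^{(k)}$ and then invoking a polynomial matrix multiplication primitive that exploits the bounded sums of (shifted) column degrees given by \Cref{lem:degreeBound}. Writing $m$ for the column dimension of $M_i^{(k)}$ and row dimension of $N_{i+1,l}^{(k)}$, so $m = \Theta(n/2^i)$, the matrix $M_i^{(k)}$ has size $n \times m$; I would partition its $n$ rows into $\Theta(n/m) = \Theta(2^i)$ strips of $m$ rows each, giving blocks $(M_i^{(k)})^{[r]}$ of size $m \times m$. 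The target product is the vertical concatenation of the block products $(M_i^{(k)})^{[r]} N_{i+1,l}^{(k)}$, each of which is an $m \times m$ times $m \times m_l$ polynomial matrix multiplication with $m_l \le m$.

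For each block product I would apply the unbalanced polynomial matrix multiplication routine developed in \cite{ZhouLS15} (itself built on Storjohann's partial linearization \cite{Storjohann03}): if an $m \times m$ left factor $U$ has column degrees summing to $T_U$ and a right factor $V$ of dimension $m \times m_l$ has $\cdeg(U)$-shifted column degrees summing to $T_V$, then $UV$ can be computed in $\tilde{O}(m^\omega + m^{\omega-1}(T_U + T_V))$ operations. By \Cref{lem:degreeBound}, $T_U \le dn$ and $T_V \le dn$ in our setting, so each block costs $\tilde{O}(m^\omega + m^{\omega-1} dn) = \tilde{O}(m^\omega(1 + dn/m))$. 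Summing over the $\Theta(n/m)$ strips and repeating the argument for $N_{i+1,r}^{(k)}$ yields
\[
\tilde{O}\bigl((n/m)\, m^\omega (1 + dn/m)\bigr) \;=\; \tilde{O}\bigl(n\,m^{\omega-1}(1 + d\,2^i)\bigr) \;=\; \tilde{O}\bigl(n(n/2^i)^{\omega-1}(1 + d\,2^i)\bigr),
\]
which is precisely the claimed complexity.

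The main obstacle is the unbalanced multiplication step. A naive bound of $\tilde{O}(m^\omega \max_j \vec{t}_j)$ per block could be as large as $\tilde{O}(m^\omega \cdot dn)$, losing a whole factor of $m = n/2^i$ over what is needed. Partial linearization circumvents this: each column of $(M_i^{(k)})^{[r]}$ (and each column of $N_{i+1,l}^{(k)}$ viewed through the $\vec{t}$-shift) whose degree exceeds the average is split into several columns of degree close to the average $O(1 + dn/m)$, inflating the dimension by only a constant factor. A single invocation of the standard $\tilde{O}(M^\omega \delta)$ degree-$\delta$ polynomial matrix multiplication algorithm on the linearized matrices of dimension $O(m)$ and uniform degree $\delta = O(1 + dn/m)$ then delivers the target cost. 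The one bookkeeping subtlety is to verify that the $\vec{t}$-shift applied to $N_{i+1,l}^{(k)}$ is compatible with the linearization of $(M_i^{(k)})^{[r]}$ so that the sum of effective degrees after linearization remains $O(dn)$; this follows from the definition of $\cdeg_{\vec{t}}$ and the bound $\sum_j \cdeg_{\vec{t}}(N_{i+1,l}^{(k)})_j \le dn$ from \Cref{lem:degreeBound}.
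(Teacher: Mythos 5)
The paper cites this as Lemma 11 of \cite{ZhouLS15} and gives no proof of its own, so there is nothing in the present text to compare against directly. Your reconstruction is consistent with what that reference does: you decompose the $n \times m$ partial product $M_i^{(k)}$ (with $m = \Theta(n/2^i)$) into $\Theta(n/m)$ square $m\times m$ strips, multiply each strip by $A_{i+1}^{(k)}$ using the unbalanced polynomial-matrix-multiplication primitive based on Storjohann-style partial linearization, and charge each block $\tilde{O}(m^\omega + m^{\omega-1}(T_U+T_V))$ with $T_U,T_V \le dn$ by \Cref{lem:degreeBound}, which sums to the claimed $\tilde{O}\bigl(n(n/2^i)^{\omega-1}(1+d\,2^i)\bigr)$. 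The bookkeeping you flag at the end is handled correctly: each strip's column degrees are dominated entrywise by $\vec{t}=\cdeg_{\vec{s}}(M_i^{(k)})$, so the strip-shifted column degrees of $N_{i+1,c}^{(k)}$ remain bounded by $\sum_j \cdeg_{\vec{t}}(N_{i+1,c}^{(k)})_j \le dn$.
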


We now have defined all the required lemmas and notation from \cite{ZhouLS15} and we can now start proving \Cref{thm:extendedKernelBasePreprocessing}. 

The following lemma is analogous to \Cref{lem:computeMcomplexity}, though now we want to compute only one row of product \eqref{eqn:computeM}. This lemma will bound the complexity of the query operation in \Cref{thm:extendedKernelBasePreprocessing}.

\begin{lemma}\label{lem:computePartialRow}
Let $v^\top := \vec{e}_r^\top M_i^{(k)}$ be some row of $M_i^{(k)}$, then we can compute $v^\top N_{i+1,c}^{(k)}$ for both $c =l,r$ in $\tilde{O}(dn^2/2^i)$.
\end{lemma}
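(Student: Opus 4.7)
The plan is to apply Lemma~\ref{lem:vectorMatrixComplexity} directly to the row vector $v$ against the kernel basis $N_{i+1,c}^{(k)}$, using the $\vec{s}$-shifted column degrees of $M_i^{(k)}$ as the relevant shift, and then to invoke the column-degree sum bounds from Lemma~\ref{lem:degreeBound} to collapse the expression.

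First I would let $m$ denote the number of columns of $M_i^{(k)}$ (equivalently, the number of rows of $A_{i+1}^{(k)}$ and of $N_{i+1,c}^{(k)}$) and set $\vec{t} := \cdeg_{\vec{s}}(M_i^{(k)}) \in \N^m$. Since $v^\top = \vec{e}_r^\top M_i^{(k)}$ is a single row of $M_i^{(k)}$, the definition of shifted column degree gives $\deg(v_j) \le \vec{t}_j - \vec{s}_r \le \vec{t}_j$ for every $j$, so $\vec{t}$ is a valid shift for $v$ in the sense required by Lemma~\ref{lem:vectorMatrixComplexity}. This observation is what lets us reuse the degree bookkeeping already established in Lemma~\ref{lem:degreeBound} instead of redoing it for $v$ alone.

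Next I would apply Lemma~\ref{lem:vectorMatrixComplexity} with shift $\vec{t}$ to bound the cost of computing $v^\top N_{i+1,c}^{(k)}$ by $\tilde{O}\bigl(m \sum_{j} \cdeg_{\vec{t}}(N_{i+1,c}^{(k)})_j\bigr)$. The second bullet of Lemma~\ref{lem:degreeBound} states that $\sum_j \cdeg_{\vec{t}}(N_{i+1,c}^{(k)})_j \le dn$ for both $c \in \{l,r\}$. Combined with the standard property of the decomposition that $m \le 2n/2^i$, this yields a cost of $\tilde{O}((n/2^i)\cdot dn) = \tilde{O}(dn^2/2^i)$ per product, and executing the computation for both $c=l$ and $c=r$ stays within the same asymptotic bound.

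The main obstacle I anticipate is purely one of bookkeeping: it is tempting but incorrect to invoke Lemma~\ref{lem:vectorMatrixComplexity} with the original shift $\vec{s}$, since the entries of $v$ are rows of $M_i^{(k)}$ whose degrees are controlled by $\vec{t}$ rather than by $\vec{s}$, and the number of columns $m$ of $M_i^{(k)}$ (not $n$) is the correct ``row dimension'' fed into the complexity bound. Once the shift $\vec{t}$ is identified and the $m$ versus $n$ distinction is handled, the proof reduces to a direct substitution of the bounds from Lemmas~\ref{lem:vectorMatrixComplexity} and~\ref{lem:degreeBound}, with no further quantitative work required.
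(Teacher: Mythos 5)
Your proof is correct and follows essentially the same approach as the paper's: both invoke Lemma~\ref{lem:vectorMatrixComplexity} to bound the cost of the vector-matrix product and then feed in the degree-sum bounds from Lemma~\ref{lem:degreeBound}. The only cosmetic difference is that the paper multiplies $v^\top$ against the whole block $A_{i+1}^{(k)} = [N_{i+1,l}^{(k)}, N_{i+1,r}^{(k)}]$ at once (getting the bound $2dn$ on the shifted column-degree sum), whereas you handle the two kernel bases separately (each bounded by $dn$); your explicit check that $\vec{t} := \cdeg_{\vec{s}}(M_i^{(k)})$ is a valid shift for $v$ is a slightly more careful piece of bookkeeping than the paper's, which uses $\vec{t} := \cdeg(\vec{v}^\top)$ and relies implicitly on monotonicity of shifted column degrees.
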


\begin{proof}
The matrices $N_{i+1,l}^{(k)}$ and $N_{i+1,r}^{(k)}$ form the matrix $A_{i+1}^{(k)}$, so we instead just compute the product $\vec{v}^\top A_{i+1}^{(k)}$. The matrix $A_{i+1}^{(k)}$ is of size $(n/2^i) \times (n/2^i)$ (up to a factor of 2).
Let $\vec{t} := \cdeg(\vec{v}^\top)$ then by \Cref{lem:degreeBound} we know $\sum_j \cdeg_{\vec{t}}(A_{i+1}^{(k)})_j \le 2 \sum_j \vec{s}_j = 2dn$ and $\sum_j \vec{t}_j \le dn$.

By \Cref{lem:vectorMatrixComplexity} the cost of computing $\vec{v}^\top A_{i+1}^{(k)}$ is $
\tilde{O}(n/2^i \sum_{j=1}^m \cdeg_{\vec{t}}(A_{i+1}^{(k)})_j )
$, which given the degree bounds can be simplified to $\tilde{O}(dn^2/2^i)$.
\end{proof}

The following lemma will bound the complexity for the pre-processing of \Cref{thm:extendedKernelBasePreprocessing}.

\begin{lemma}\label{lem:adjointPreprocessing}
If we already know the matrix $A_i$ for $i=1,...\lceil \log n \rceil$, then for any $0 \le \mu \le 1$ we can compute $M_{\lceil \log n^\mu \rceil}$ in $\tilde{O}(dn^{\omega(1-\mu)+3\mu})$ field operations.
\end{lemma}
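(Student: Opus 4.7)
The plan is to compute $M_{\lceil \log n^\mu \rceil}$ iteratively, one level at a time, using the recursion \eqref{eqn:computeM}: starting from $M_0 = \I$, and for each $i = 0, 1, \ldots, \lceil \log n^\mu \rceil - 1$, computing every block $M_{i+1}^{(2k-1)} = M_i^{(k)} N_{i+1,l}^{(k)}$ and $M_{i+1}^{(2k)} = M_i^{(k)} N_{i+1,r}^{(k)}$ for $k = 1, \ldots, 2^i$. Since the matrices $A_1, \ldots, A_{\lceil \log n \rceil}$ are given, the blocks $N_{i+1,l}^{(k)}$ and $N_{i+1,r}^{(k)}$ are already available, so no extra preprocessing is needed beyond these matrix multiplications.

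The cost of producing $M_{i+1}$ from $M_i$ is the sum over $k$ of the costs in \Cref{lem:computeMcomplexity}. There are $2^i$ values of $k$ at level $i$, and each block multiplication costs $\tilde{O}(n(n/2^i)^{\omega-1}(1 + d 2^i))$, giving a total cost at level $i$ of
\[
\tilde{O}\!\bigl(2^i \cdot n (n/2^i)^{\omega-1}(1+d2^i)\bigr)
= \tilde{O}\!\bigl(n^\omega \cdot 2^{i(2-\omega)}(1+d 2^i)\bigr)
= \tilde{O}\!\bigl(n^\omega 2^{i(2-\omega)} + d\, n^\omega 2^{i(3-\omega)}\bigr).
\]
Both exponents $2-\omega$ and $3-\omega$ are positive, so the geometric sums over $i = 0, \ldots, \lceil \log n^\mu \rceil - 1$ are dominated by their last term, where $2^i = \Theta(n^\mu)$.

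Summing yields total cost
\[
\tilde{O}\!\bigl(n^{\omega + \mu(2-\omega)} + d\, n^{\omega + \mu(3-\omega)}\bigr)
= \tilde{O}\!\bigl(d\, n^{\omega(1-\mu)+3\mu}\bigr),
\]
using $d \ge 1$ (and noting $\omega + \mu(3-\omega) = \omega(1-\mu) + 3\mu$). This matches the claimed bound.

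The proof is essentially a direct application of \Cref{lem:computeMcomplexity} followed by summing a geometric series, so there is no serious obstacle. The only thing to be slightly careful about is handling the ``$+1$'' term in $(1 + d2^i)$ — for very small $i$ when $d2^i$ could be smaller than $1$ — but even in that regime the corresponding geometric sum $\sum_i 2^{i(2-\omega)}$ is bounded by $n^{\mu(2-\omega)}$, which is absorbed by the dominant term $d\, n^{\mu(3-\omega)}$ since $d \ge 1$. A secondary concern is that the block sizes $n/2^i$ and degrees $d 2^i$ are only accurate up to constants (the actual partition into kernel bases need not be perfectly balanced); however, \Cref{lem:computeMcomplexity} already absorbs such constant factors, and the average-column-degree bound in \Cref{lem:degreeBound} ensures that these inaccuracies do not accumulate across levels.
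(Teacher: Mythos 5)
Your proof takes the same approach as the paper's: iterate the recursion \eqref{eqn:computeM} level by level, bound the cost of level $i$ by $2^i$ times the bound from \Cref{lem:computeMcomplexity}, and sum. The paper states only the sum $\sum_{i}\tilde{O}(n(n/2^i)^{\omega-1}(1+d 2^i)) \cdot 2^i = \tilde{O}(dn^{\omega(1-\mu)+3\mu})$ without elaboration; you spell out the geometric-series computation it leaves implicit.

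One small slip: you assert that ``both exponents $2-\omega$ and $3-\omega$ are positive,'' but $\omega \ge 2$, so $2-\omega \le 0$. Consequently the piece $\sum_i n^\omega 2^{i(2-\omega)}$ is not dominated by its last term; it is at most $\tilde{O}(n^\omega)$ (a constant-ratio geometric series when $\omega > 2$, or a $\log n$ factor when $\omega = 2$), not $\tilde{O}(n^{\omega+\mu(2-\omega)})$ as your reasoning suggests. This does not affect the final bound, since $n^\omega \le d\, n^{\omega + \mu(3-\omega)} = d\, n^{\omega(1-\mu)+3\mu}$ for $d \ge 1$ and $\mu \ge 0$, so the term is still absorbed by the dominant $d\,n^{\omega(1-\mu)+3\mu}$ term --- just for a slightly different reason than stated. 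It is worth being careful here, since the whole point of this lemma is that the $(3-\omega)$-geometric piece grows with $i$ while the contribution of the lower-order $+1$ term does not.
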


\begin{proof}
To compute $M_{i+1}$, requires to compute all $2^i$ many $M_{i+1}^{(k)}$ for $k=1...2^i$. Assume we already computed matrix $M_i$, then we can compute $M_{i+1}^{(k)}$ for $k=1...2^i$ via \Cref{lem:computeMcomplexity}. Inductively the total cost we obtain is:
$$
\sum_{i=1}^{\lceil \log n^\mu \rceil}\tilde{O}(n(n/2^i)^{\omega-1}(1+d 2^i)) \cdot 2^i = \tilde{O}(dn^{\omega(1-\mu)+3\mu})
$$
\end{proof}

The last lemma we require for the proof of \Cref{thm:extendedKernelBasePreprocessing} is that we can compute the determinant of $B$ in $\tilde{O}(dn^\omega)$ field operations.
\begin{lemma}[\cite{Storjohann03,LabahnNZ17}]\label{lem:polynomialDeterminant}
Let $B \in \F[X]^{n \times n}$ be a matrix of degree at most $d$, then we can compute $\det(B)$ in $\tilde{O}(dn^\omega)$ field operations.
\end{lemma}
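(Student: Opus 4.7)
The plan is to invoke one of the existing algorithms from the symbolic computation literature, since \Cref{lem:polynomialDeterminant} is essentially a restatement of the results of \cite{Storjohann03} and \cite{LabahnNZ17}. Given that the paper has already built up the kernel basis decomposition machinery of \cite{ZhouLS15}, the most natural self-contained route is to follow the Labahn-Neiger-Zhou approach, reusing the factorization $B \prod_{i=1}^{\lceil \log n \rceil} A_i = D$ available from the preprocessing of \Cref{thm:kernelBasePreprocessing}.

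From this factorization, taking determinants gives
\begin{equation*}
\det(B) \cdot \prod_{i=1}^{\lceil \log n \rceil} \det(A_i) = \det(D).
\end{equation*}
Since $D$ is diagonal with entries of degree at most $O(dn)$, the right hand side is computable in $\tilde{O}(dn^2)$ operations as an $n$-fold product of univariate polynomials. For each $A_i$, which is block-diagonal with $2^{i-1}$ blocks $A_i^{(k)}$ of dimension $\approx n/2^{i-1}$ and degree $\approx d \cdot 2^{i-1}$, one has $\det(A_i) = \prod_k \det(A_i^{(k)})$. Once all the $\det(A_i)$ are in hand, $\det(B)$ is recovered by a single polynomial division by a polynomial of total degree $O(dn)$, which fits within the claimed budget.

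The main obstacle is computing the block determinants $\det(A_i^{(k)})$ within $\tilde{O}(dn^\omega)$. A naive evaluation-interpolation per block blows up: at level $i{=}1$ alone a $\tilde{O}(n^\omega)$ constant-matrix determinant at each of $dn$ evaluation points already costs $\tilde{O}(dn^{\omega+1})$. One must instead exploit the structural fact from \cite{LabahnNZ17} that the kernel bases $N_{i,l}^{(k)}, N_{i,r}^{(k)}$ constructed by \cite{ZhouLS15} are \emph{minimal} (in the shifted column degree sense controlled by \Cref{lem:degreeBound}), and the determinant of such a minimal kernel basis can be identified with a specific minor of the matrix it reduces and tracked during the construction itself rather than recomputed afterwards. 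Summing costs across the $\lceil \log n \rceil$ levels using \Cref{lem:computeMcomplexity}, with the same telescoping as in \Cref{lem:adjointPreprocessing} at $\mu = 0$, yields the claimed $\tilde{O}(dn^\omega)$ bound. If this bookkeeping through minimal kernel bases proves too delicate to carry out cleanly in the paper, the alternative is to invoke \cite{Storjohann03} as a black box: high-order lifting computes the $X$-adic expansion of $B^{-1}$ modulo $X^{\Theta(dn)}$ in $\tilde{O}(dn^\omega)$ operations, from which $\det(B)$ is extracted by standard rational reconstruction.
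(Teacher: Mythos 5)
The paper states this lemma as a bare citation to \cite{Storjohann03,LabahnNZ17} and offers no proof, so your ultimate fallback --- invoking \cite{Storjohann03}'s high-order lifting as a black box --- matches the paper exactly and is the right call.

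Your self-contained sketch via $\det(B)\cdot\prod_i\det(A_i)=\det(D)$ does capture the skeleton of the \cite{LabahnNZ17} approach, but it has a real gap: the claim that the block determinants $\det(A_i^{(k)})$ can be ``identified with a specific minor\ldots and tracked during the construction'' is precisely the nontrivial content of that paper, and nothing in \Cref{lem:degreeBound} or \Cref{lem:computeMcomplexity} gives you those determinants for free --- those lemmas control shifted column degrees and the cost of forming the prefix products $M_i$, not determinants of the kernel bases. The reference to ``the same telescoping as \Cref{lem:adjointPreprocessing} at $\mu=0$'' is also vacuous: at $\mu=0$ the sum in that lemma has no terms, and the $\tilde{O}(dn^\omega)$ bound for constructing the chain $A_1,\dots,A_{\lceil\log n\rceil}$ is a separate result quoted from \cite{ZhouLS15}, not something \Cref{lem:adjointPreprocessing} delivers. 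Since you end by deferring to the citation, your bottom-line conclusion is correct; just be aware that the intermediate sketch is a plausibility argument, not a proof.
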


\begin{proof}[Proof of \Cref{thm:extendedKernelBasePreprocessing}]
The claim is that for any $0 \le \mu \le 1$ we can, after $\tilde{O}(dn^{\omega(1-\mu)+3\mu})$ pre-processing of $B$, compute any entry $\adj(B)_{i,j}$ in $\tilde{O}(n^{2-\mu})$ operations.

\paragraph{Pre-processing}
We first compute the determinant $\det(B)$ via \Cref{lem:polynomialDeterminant} and construct the chain of matrices $A_1,...A_{\lceil \log n \rceil}$ as in \cite{ZhouLS15} in $\tilde{O}(dn^\omega)$, then we compute $M_{\lceil \log n^\mu \rceil}$ in $\tilde{O}(dn^{\omega(1-\mu)+3\mu})$ using \Cref{lem:adjointPreprocessing}.

\paragraph{Queries}
When answering a query for $\adj(A)_{i,j}$ we compute one entry $(M_{\lceil \log n \rceil})_{i,j}$, multiply it with $\det(A)$ and divide it by $D_{j,j}$, because $\adj(A) = (\prod_{i=1}^{\lceil \log n \rceil} A_i) \det(A) D^{-1} = M_{\lceil \log n \rceil} \det(A) D^{-1}$.

Here the expensive part is to compute the entry of $M_{\lceil \log n \rceil}$, which is done by computing one row of $M_{\lceil \log n \rceil}^{(k)}$ for some appropriate $k$. Via \eqref{eqn:computeM}, we know
$$
M_{\lceil \log n \rceil}^{(k)} = M_{\lceil \log n^\mu \rceil} \prod_{t=1}^{\lceil \log n \rceil - \lceil \log n^\mu \rceil}N_{\lceil \log n^\mu \rceil+t,c_t}^{(k_t)}$$
for some sequence $k_t \in [2^{\lceil \log n^\mu \rceil+t}]$, $c_t \in \{l,r\}$.

So we only have to compute the product of the $i$th row of $M_{\lceil \log n^\mu \rceil}$ with a sequence of $(N_{\lceil \log n^\mu \rceil+t,c_t}^{(k_t)})_{t=1...\lceil \log n \rceil-1}$ matrices. Computing this product from left to right means we compute the following intermediate results
$$
\vec{e}_i^\top M_{\lceil \log n^\mu \rceil} \prod_{t=1}^{r}N_{\lceil \log n^\mu \rceil+t,c_t}^{(k_t)}
= \vec{e}_i^\top M_{\lceil \log n^\mu \rceil + r}^{(k_r)}
$$
for every $r=1,...,\lceil \log n \rceil - \lceil \log n^\mu \rceil$. So each intermediate result is just the $i$th row of some matrix $M_{\lceil \log n^\mu \rceil + r}^{(k_r)}$. This means such a vector-matrix product can be computed via \Cref{lem:computePartialRow} in
$$
\tilde{O}(dn^2/2^{\lceil \log n^\mu \rceil+r})
$$
Here the first product for $r=1$ is the most expensive and the total cost for all $\lceil \log n \rceil - \lceil \log n^\mu \rceil$ many vector-matrix products becomes $\tilde{O}(dn^{2-\mu})$.

\end{proof}

\section{Sensitive Distance and Reachability Oracles}

In this section we will use the adjoint oracle from \Cref{sec:inverseOracle} to obtain the results presented in \Cref{sec:introduction}. A high-level description of that algorithm was already outlined in \Cref{sec:overview}.

This section is split into two parts: First we will describe in \Cref{sec:dyn_adjoint} our results for maintaining the adjoint of a polynomial matrix, which will conclude with the proof of \Cref{thm:sense st}. The second subsection \ref{app:sensitiveReachabilityOracle} will explain how to obtain the sensitive reachability oracle \Cref{thm:reach}. These graph theoretic results will be stated more accurately than in the overview by adding trade-off parameters and memory requirements.

All proofs in this section assume that the matrix $A$ stays non-singular throughout all updates, which is the case w.h.p for matrices constructed via the reduction of \Cref{lem:shortestDistanceReduction}.

\subsection{Adjoint oracles}
\label{sec:dyn_adjoint}

\begin{theorem}\label{thm:emergencyInverseElement}
Let $\F$ be some field and $A \in \F[X]^{n \times n}$ be a polynomial matrix with $\det(A) \neq 0$ of degree $d$. 
Then for any $0\le \mu \le 1$ we can create a data-structure 
storing $O(dn^{2+\mu})$ field elements 
in $\tilde{O}(dn^{\omega(1-\mu)+3\mu})$ field operations, such that:

We can change $f$ columns of $A$ 
and update our data-structure in 
$\tilde{O}(dn^{2-\mu}f^2+dnf^\omega)$ field operations 
storing additional $O(dnf^2)$ field elements. 
This updated data-structure allows for querying entries of $\adj(A)$ 
in $\tilde{O}(dnf^2+dn^{2-\mu}f)$ field operations 
and queries to the determinant of the new $A$ 
in $O(dn)$ operations.
\end{theorem}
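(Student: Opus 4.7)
My plan is to follow the five-step scheme behind \Cref{lem:simpleAdjoint}, but to never write $\adj(A)$ or $\adj(M)$ down explicitly: I use the single-entry oracle of \Cref{thm:extendedKernelBasePreprocessing} for $\adj(A)$, and recursively apply the basic oracle of \Cref{thm:kernelBasePreprocessing} to the small capacitance matrix $M$ to avoid materialising $\adj(M)$. The updates are encoded as in the warm-up by $A\mapsto A+UV^\top$ with $U,V\in\F[X]^{n\times f}$ each having one non-zero entry per column, and then the Sherman--Morrison--Woodbury identity of \Cref{lem:adjointUpdate} expresses $\adj(A+UV^\top)_{i,j}$ in terms of $\adj(A)$, $\det(A)$, and $M:=\I\det(A)+V^\top\adj(A)U$, which is $f\times f$ of degree $O(dn)$.

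\textbf{Preprocessing and update.} Preprocessing consists of running \Cref{thm:extendedKernelBasePreprocessing} on $A$ with the given $\mu$, giving an element oracle for $\adj(A)$ with cost $\tilde O(dn^{\omega+(3-\omega)\mu})$ and storage $O(dn^{2+\mu})$, together with $\det(A)$ via \Cref{lem:polynomialDeterminant} in $\tilde O(dn^\omega)$. On an update I would read the $f^2$ entries of $\adj(A)$ indexed by the supports of $U$ and $V$ through $f^2$ oracle calls (cost $\tilde O(dn^{2-\mu}f^2)$) and assemble $M$ using $O(dnf^2)$ additional field elements. The key move is to then preprocess $M$ itself with \Cref{thm:kernelBasePreprocessing}: in our setting $n\leftarrow f$ and $d\leftarrow dn$, so this costs $\tilde O(dn\cdot f^\omega)$ operations and $O(dnf^2)$ additional memory, and it delivers $\det(M)$ at no extra cost; $\det(A)^f$ is obtained in $\tilde O(dnf)$ by repeated squaring. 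Non-singularity of $M$, needed by \Cref{thm:kernelBasePreprocessing}, follows from the classical identity $\det(M)=\det(A)^{f-1}\det(A+UV^\top)$ and the non-singularity assumption on $A$ and $A+UV^\top$; the same identity also yields $\det(A+UV^\top)$ via one polynomial division, which is what the promised $O(dn)$-time determinant query reads back.

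\textbf{Entry query.} For a query $(i,j)$, the formula of \Cref{lem:adjointUpdate} reduces $\adj(A+UV^\top)_{i,j}$ to the precomputed $\det(M)/\det(A)^f$, a single call $\adj(A)_{i,j}$, and the bilinear form $\vec u^\top\adj(M)\vec v$, where $\vec u:=\vec e_i^\top\adj(A)U$ and $\vec v:=V^\top\adj(A)\vec e_j$. By the one-non-zero-per-column sparsity of $U$ and $V$, the vectors $\vec u,\vec v\in\F[X]^f$ are built from $2f+1$ further oracle calls on $\adj(A)$, contributing $\tilde O(dn^{2-\mu}f)$. The crucial step is to evaluate $\vec u^\top\adj(M)\vec v$: I would feed $\vec u$ into the previously prepared $M$-oracle, producing $\vec u^\top\adj(M)\in\F[X]^f$ in $\tilde O(dn\cdot f^2)$ operations by \Cref{thm:kernelBasePreprocessing}, and then take the inner product with $\vec v$ in another $\tilde O(dnf^2)$ operations, since the entries of $\vec u^\top\adj(M)$ have degree $O(fdn)$. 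Assembling everything through the formula yields the stated $\tilde O(dn^{2-\mu}f+dnf^2)$ entry query time.

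\textbf{Main obstacle.} The conceptual barrier is exactly the bilinear form: an explicit $\adj(M)$ has entries of degree $\Theta(fdn)$ and would need $\Theta(dnf^3)$ space and time, breaking both the $O(dnf^2)$ memory and $\tilde O(dnf^2)$ query targets. Recursively applying the kernel-basis oracle to the small matrix $M$ in place of $\adj(M)$ is what closes this $f$-factor gap, with the only subtlety being the non-singularity check already handled above. The remaining task is to sum the operation counts and storage bounds of the individual steps, which reproduces the three bounds in the theorem statement.
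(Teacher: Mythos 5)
Your proposal follows essentially the same route as the paper's own proof: preprocess $A$ with the element-query oracle of \Cref{thm:extendedKernelBasePreprocessing} and compute $\det(A)$; on update, read $O(f^2)$ entries of $\adj(A)$ to form the $f\times f$ matrix $M$ of degree $O(dn)$, preprocess $M$ with \Cref{thm:kernelBasePreprocessing} (this is exactly the ``key move'' the paper makes), and maintain $\det(A+UV^\top)=\det(M)/\det(A)^{f-1}$; on query, assemble \Cref{lem:adjointUpdate}'s formula from $O(f)$ oracle calls to $\adj(A)$ plus one vector query $\vec u^\top\adj(M)$ to the $M$-oracle. All the per-step costs you list match the paper's, and your added observation about non-singularity of $M$ (via $\det(M)=\det(A)^{f-1}\det(A+UV^\top)$) is a correct, if implicit, assumption in the paper's write-up.
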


\begin{proof}
We start with the high-level idea: We will express the change of $f$ entries of $A$ via the rank $f$ update $A+UV^\top$, where $U,V \in \F^{n \times f}$ and both matrices have only one non-zero entry per column.
Via \Cref{lem:adjointUpdate} we have for $M := \I \cdot \det(A) + V^\top \adj(A)U$ that
$$
\adj(A+UV^\top ) = \frac{\adj(A)\det(M) - \left(\adj(A) U\right) \: \adj(M) \: \left(V^\top \adj(A)\right)} {\det(A)^f }
$$

Our algorithm is as follows:
\paragraph{Initialization}

During the initialization use \Cref{thm:extendedKernelBasePreprocessing} on matrix $A$ in $\tilde{O}(dn^{\omega+(3-\omega)\mu})$ operations, which will allow us later to query entries $\adj(A)_{i,j}$ in $\tilde{O}(dn^{2-\mu})$ operations. We also compute $\det(A)$ via \Cref{lem:polynomialDeterminant} in $\tilde{O}(dn^\omega)$ operations.

\paragraph{Updating the data-structure}

The first task is to compute the matrix $M$. Note that for element updates to $A$, the matrices $U$ and $V$ have only one non-zero entry per column, so $V^\top \adj(A) U$ is just an $f \times f$ submatrix of $A$ where each entry is multiplied by one non-zero entry of $U$ and $V$. Thus we can compute $M = \I \cdot \det(A) + V^\top \adj(A)U$ in $\tilde{O}(df^2n^{2-\mu})$ operations, thanks to the pre-processing of $A$.
Next, we pre-processing the matrix $M$ using \Cref{thm:kernelBasePreprocessing}, which requires $\tilde{O}(dnf^\omega)$ as the degree of $M$ is bounded by $O(dn)$. We also compute $\det(M)$ in $\tilde{O}(dnf^\omega)$ operations.

Thus the update complexity is bounded by $\tilde{O}(df^2n^{2-\mu} + dnf^\omega)$ operations.

\paragraph{Querying entries}

To query an entry $(i,j)$ of $\adj(A+UV^\top)$ we have to compute
$$
\frac{
\adj(A)_{i,j} \det(M) 
- \left(\vec{e}_i \adj(A) U\right) 
\: \adj(M) \: 
\left(V^\top \adj(A) \vec{e}_j\right)
}{\det(A)^f}.$$

Here $\adj(A)_{i,j} \det(M)$ can be computed in $\tilde{O}(dn^{2-\mu}+dnf)$ operations.
The vectors $\vec{e}_i \adj(A) U$ and $V^\top \adj(A) \vec{e}_j$ are just $f$ entries of
$\adj(A)$ where each entry is multiplied by one non-zero entry of $U$ or $V$, so we can compute them in
$\tilde{O}(dn^{2-\mu}f)$ operations. Multiplying one of these vectors with $\adj(M)$ requires $\tilde{O}(dnf^2)$ operations because of the
pre-processing of $M$ via \Cref{thm:kernelBasePreprocessing}.
The product of $\left(\vec{e}_i \adj(A) U\right) \adj(M)$ with $\left(V^\top \adj(A) \vec{e}_j\right)$ can be computed in
$\tilde{O}(dnf^2)$ operations. Subtracting $\adj(A)_{i,j} \det(M)$ from the product and dividing by $\det(A)^f$ can be done in
$\tilde{O}(dnf)$ operations.

The query complexity is thus $\tilde{O}(dnf^2+dn^{2-\mu}f)$ operations.

\paragraph{Maintaining the determinant}

We have $\det(A+UV^\top) = \det(M) / \det(A)^{f-1}$ 
(as can be seen in the proof of \Cref{sec:proof adjointUpdate}). 
The division requires only $\tilde{O}(dnf)$ operations.

\end{proof}

In this section we will state the result from \Cref{sec:introduction} in a more
formal way. We will state trade-off parameters, memory consumption and we will
separate the sensitive setting into two phases: update and query.

\begin{theorem}[{Corresponds to \Cref{thm:sense st}}]\label{thm:emergencyGraphPairOnly}
Let $G$ be a graph with $n$ nodes and edge weights in $[-W,W]$.
For any $0\le \mu \le 1$ we can create 
in $\tilde{O}(Wn^{\omega(1-\mu)+3\mu})$ time 
a Monte Carlo data-structure 
that requires $O(Wn^{2+\mu} \log n)$ bits of memory, 
such that:

We can then change $f$ edges (additions and removals) and update our
data-structure in $\tilde{O}(Wn^{2-\mu}f^2+Wnf^\omega)$ time using additional
$O(Wnf^2 \log n)$ bits of memory. This updated data-structure allows for querying the
distance between any pair of nodes in $\tilde{O}(Wnf^2+Wn^{2-\mu}f)$ time,
or report that there exists a negative cycle in $O(1)$ time.
\end{theorem}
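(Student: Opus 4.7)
The plan is to derive \Cref{thm:emergencyGraphPairOnly} as an essentially mechanical consequence of \Cref{thm:emergencyInverseElement} combined with the algebraic reduction \Cref{lem:shortestDistanceReduction}. During preprocessing, I would pick a prime $p$ of order $n^c$, set $\F = \Z_p$, and build the polynomial matrix $A \in \F[X]^{n \times n}$ with $A_{i,i} = X^W$ and $A_{i,j} = a_{i,j} X^{W+c_{i,j}}$ for random $a_{i,j} \in \F$. Since $W + c_{i,j} \in [0,2W]$, this matrix has degree $d = O(W)$, and by \Cref{lem:shortestDistanceReduction} the coefficients of $\adj(A)$ encode distances with high probability, while the low-degree monomials of $\det(A)$ detect negative cycles. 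With high probability $\det(A) \neq 0$, and this property is preserved under the fresh random choices used in each update (by a union bound over a polynomial number of updates/queries), so the non-singularity assumption of \Cref{thm:emergencyInverseElement} is met.

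Given $A$, I would feed it into the adjoint data structure of \Cref{thm:emergencyInverseElement} with the parameter $\mu$. Substituting $d = O(W)$ into that theorem's bounds directly yields $\tilde{O}(Wn^{\omega(1-\mu)+3\mu})$ preprocessing operations, $\tilde{O}(Wn^{2-\mu}f^2 + Wnf^\omega)$ update operations, and $\tilde{O}(Wnf^2 + Wn^{2-\mu}f)$ query operations. Since each field element of $\Z_p$ is stored in $O(\log n)$ bits, the $O(dn^{2+\mu}) = O(Wn^{2+\mu})$ field elements of the preprocessed structure require $O(Wn^{2+\mu}\log n)$ bits and the $O(dnf^2) = O(Wnf^2)$ additional field elements stored at each update take $O(Wnf^2\log n)$ bits. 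Each of the $f$ edge modifications (insertion, deletion, or reweighting) touches a single entry $A_{i,j}$, so the batch can be written as a single rank-$f$ column update $A \mapsto A + UV^\top$ with $U,V$ having one nonzero entry per column, fitting the interface of \Cref{thm:emergencyInverseElement} exactly.

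For a distance query between $(u,v)$, I would call the adjoint oracle to obtain $\adj(A+UV^\top)_{u,v} \in \F[X]$, which is a polynomial of degree less than $Wn$ represented explicitly; scanning its coefficients for the smallest degree $k$ with a nonzero monomial and returning $k - W(n-1)$ gives the distance by \Cref{lem:shortestDistanceReduction}, all within the stated query budget. For negative-cycle reporting in $O(1)$ time, I would use the fact that \Cref{thm:emergencyInverseElement} also maintains $\det(A+UV^\top)$, so after each update I would spend $O(Wn)$ additional time (negligible compared to the update cost) to scan its coefficients for any nonzero monomial of degree below $Wn$ and cache a single Boolean flag; a query then returns this flag in $O(1)$.

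The proof is largely bookkeeping, but the one subtlety to watch is the conversion from arithmetic operations over $\F = \Z_p$ to actual time: because $p = \poly(n)$, every field operation costs $\tilde{O}(1)$ time, so the arithmetic complexities from \Cref{thm:emergencyInverseElement} translate directly into time complexities (absorbing the extra log factor into the $\tilde{O}$). The only real ``obstacle'' is verifying correctness under updates, namely that a batch of $f$ edge modifications never forces us outside the hypotheses of \Cref{thm:emergencyInverseElement} (non-singularity of both $A$ and $A+UV^\top$); this follows from the Schwartz--Zippel guarantee provided by \Cref{lem:shortestDistanceReduction}, which I would invoke both at preprocessing and, if needed, by re-randomizing the nonzero coefficients of the updated entries to ensure genericity of $A+UV^\top$ with high probability.
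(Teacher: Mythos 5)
Your proposal is correct and matches the paper's proof essentially verbatim: both apply the reduction of \Cref{lem:shortestDistanceReduction} to build a degree-$O(W)$ polynomial matrix over $\Z_p$ with $p=\poly(n)$, feed it into the adjoint oracle of \Cref{thm:emergencyInverseElement}, translate field elements to $O(\log n)$ bits and field operations to $\tilde O(1)$ time, and cache a Boolean ``negative cycle'' flag during the update by scanning the maintained determinant for a monomial of degree below $Wn$ so queries can report it in $O(1)$. The only point the paper makes more lightly is that non-singularity of $A$ and $A+UV^\top$ is assumed to hold w.h.p.\ via \Cref{lem:shortestDistanceReduction} rather than enforced by explicit re-randomization, but this is a cosmetic difference, not a gap.
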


\begin{proof}[Proof of \Cref{thm:emergencyGraphPairOnly}]
	
We construct the matrix as specified in \Cref{lem:shortestDistanceReduction}.
Since the field size is polynomial in $n$, the arithmetic operations can be
executed in $O(1)$ time in the standard model and saving one field element
requires $O(\log n)$ bits.
	
The determinant of the matrix is computed during the update, where we will
also check for negative cycles by checking if the determinant has a non-zero
monomial of degree less than $Wn$. This way we can answer the existence of a
negative cycle in $O(1)$ time, when a query to a distance is performed.
	
\Cref{thm:emergencyGraphPairOnly} is now implied by \Cref{thm:emergencyInverseElement}.
	
\end{proof}

\subsection{Sensitive reachability oracle}
\label{app:sensitiveReachabilityOracle}

So far we have only proven the distance applications. Using the same techniques
the result can also be extended to a sensitive reachability oracle.

Reachability can be formulated as a distance problem where every edge has cost 0.
In that case the matrix constructed in \Cref{lem:shortestDistanceReduction} has
degree 0, so the matrix is in $\F^{n \times n}$ and we do not have to bother with
polynomials anymore. The algorithm for the following result is essentially the
same as \Cref{thm:emergencyInverseElement}, but since we no longer have polynomial
matrices, we no longer require more sophisticated adjoint oracles and can instead
use simpler subroutines.

\begin{theorem}[{Corresponds to \Cref{thm:reach}}]\label{thm:reachability}
Let $G$ be a graph with $n$ nodes.
We can construct in $O(n^{\omega})$ time a Monte Carlo data-structure 
that requires $O(n^{2} \log n)$ bits of memory, such that:

We can change any $f$ edges (additions and removals) and update our data-structure
in $O(f^\omega)$ time using additional $O(f^2 \log n)$ bits of memory. This updated
data-structure allows for querying the reachability between any pair of nodes in
$O(f^2)$ time.
\end{theorem}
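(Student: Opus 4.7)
The plan is to specialize the algorithm of \Cref{thm:emergencyInverseElement} to the degree-zero case and combine it with the reachability reduction. First I would invoke \Cref{lem:shortestDistanceReduction} with all edge weights set to zero and $W = 0$; the resulting matrix $A$ lies in $\F^{n \times n}$ (degree $d = 0$), and there is a directed path from $u$ to $v$ in $G$ if and only if $\adj(A)_{u,v} \neq 0$, with high probability over the random choice of the $a_{i,j}$. Since $A$ has degree zero, none of the polynomial-matrix machinery from \Cref{sec:inverseOracle} is needed: $\adj(A)$ and $\det(A)$ can be computed explicitly in $O(n^\omega)$ field operations via a single matrix inversion (scaled by the determinant), and $\adj(A)$ is stored as $n^2$ field elements, giving $O(n^2 \log n)$ bits of memory.

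For the update phase, I would express the $f$ edge modifications as a rank-$f$ perturbation $A + UV^\top$ where $U, V \in \F^{n \times f}$ each have one non-zero entry per column. The key identity from \Cref{lem:adjointUpdate} reduces computing $\adj(A+UV^\top)$ to manipulating the $f \times f$ matrix $M := \I \cdot \det(A) + V^\top \adj(A) U$. Because of the sparsity of $U$ and $V$, the product $V^\top \adj(A) U$ is just $f^2$ explicit entries of the stored $\adj(A)$ scaled by non-zero entries of $U$ and $V$, so $M$ is assembled in $O(f^2)$ operations. Afterwards I compute $\adj(M)$ and $\det(M)$ in $O(f^\omega)$ operations (by the standard matrix inversion bound, since $M$ is an $f\times f$ matrix over $\F$), completing the update in $O(f^\omega)$ field operations and storing an extra $O(f^2)$ field elements, which is $O(f^2 \log n)$ bits.

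For the query, given $(u, v)$, I would evaluate
\[
\adj(A + UV^\top)_{u,v} = \frac{\adj(A)_{u,v}\,\det(M) \;-\; \vec{u}^\top \adj(M)\,\vec{v}}{\det(A)^f},
\]
where $\vec{u} := (\vec{e}_u^\top \adj(A) U)^\top$ and $\vec{v} := V^\top \adj(A)\,\vec{e}_v$. Each of $\vec{u}$ and $\vec{v}$ is obtained by reading $f$ entries of the stored adjoint and scaling by the non-zero entries of $U$ and $V$, costing $O(f)$ operations. The bilinear form $\vec{u}^\top \adj(M)\,\vec{v}$ is computed in $O(f^2)$ operations, and a constant number of scalar operations (including the precomputed power $\det(A)^f$) finishes the evaluation. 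Reachability from $u$ to $v$ in the updated graph is then decided by checking whether this entry is non-zero, giving the claimed $O(f^2)$ query time.

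There is no serious new obstacle beyond what was already handled in \Cref{thm:emergencyInverseElement}: dropping to degree zero eliminates the $dn$ factors and removes the need for the kernel basis oracles \Cref{thm:kernelBasePreprocessing,thm:extendedKernelBasePreprocessing}. The only points requiring care are (i) ensuring that $A$ and $A+UV^\top$ remain non-singular so that \Cref{lem:adjointUpdate} applies, and (ii) that the reachability characterization via $\adj(A)_{u,v}\neq 0$ holds with high probability; both follow from the randomized construction of \Cref{lem:shortestDistanceReduction} over a field of size polynomial in $n$, which simultaneously guarantees that each arithmetic operation costs $O(1)$ time in the standard model and each field element uses $O(\log n)$ bits.
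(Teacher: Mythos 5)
Your proposal is correct and follows essentially the same route as the paper: build the degree-zero matrix from \Cref{lem:shortestDistanceReduction} with $W=0$, precompute $\adj(A)$ and $\det(A)$ explicitly in $O(n^\omega)$, absorb the $f$ edge changes as a rank-$f$ update via \Cref{lem:adjointUpdate}, precompute $\adj(M)$ and $\det(M)$ in $O(f^\omega)$, and answer queries by reading $O(f)$ stored adjoint entries and evaluating the $O(f^2)$ bilinear form. The paper's proof is just a terser version of this same argument.
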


\begin{proof}

We will now construct a data-structure that, after some initial pre-processing of
some matrix $A \in \F^{n \times n}$, allows us to quickly query elements of
$\adj(A+UV^\top )$ for any sparse $U,V \in \F^{n \times f}$ where $U$ and $V$ have
at most one non-zero entry per column.

\paragraph{Pre-processing}
We compute $\det(A)$ and $\adj(A)$ explicitly in $O(n^\omega)$ field operations.

\paragraph{Update}
We receive matrices $U,V \in \F^{f \times n}$, where each column has only one
non-zero entry. We compute $V^\top \adj(A) U$ in $O(f^2)$ operations because of the
sparsity of $U$ and $V$.
This means we can now compute for $M := \I \cdot \det(A) + V^\top \adj(A) U$ the
matrix $\adj(M)$ and $\det(M)$ in $O(f^\omega)$ operations.

\paragraph{Query}
When querying an entry $(i,j)$ of $\adj(A+U V^\top)$ we have to compute:
\begin{align*}
\frac{
\adj(A)_{i,j} \det(M) - \adj(A)_{i,[n]}U \: \adj(M) \: V^\top \adj(A)_{[n],j}
}{
\det(A)^f
}
\end{align*}

The vectors $\vec{u}^\top:=\adj(A)_{i,[n]}U$ and
$\vec{v} := V^\top \adj(A)_{[n],j}$ can be computed in $O(f)$ operations 
as they are just $O(f)$ entries of the adjoint, 
each multiplied by one non-zero entry of $U$ or $V$. 
The product $\vec{u}^\top \adj(M) \vec{v}$ can be computed in $O(f^2)$ operations.

Thus the entry of $\adj(A+UV^\top)$ can be computed in $O(f^2)$ operations.

\end{proof}

\section{Open Problems}
\label{sec:open_problems}

We present the first graph application 
of the kernel basis decomposition by \cite{JeannerodV05,ZhouLS15},
and we are interested in seeing if there are further uses of that technique 
outside of the symbolic computation area.

Our sensitive distance oracle has subqubic preprocessing time and subquadratic query time.
When restricting to only a single edge deletion, Grandoni and V. Williams
where able to improve this to subqubic preprocessing and sublinear query time \cite{GrandoniW12}.
An interesting open question is whether a similar result can be obtained for multiple edge deletions.
Even supporting just two edge deletions 
with subqubic preprocessing and sublinear query time 
would a good first step.
Alternatively, disproving the existence of such a data-structure would also be interesting.
Currently the best (conditional) lower bound 
by V. Williams and Williams \cite{VWilliamsW18,HenzingerL0W-ITCS17} refutes such an algorithm,
if its complexity has a $\text{polylog}(W)$ dependency on the largest edge weight $W$.
For algorithms with $\text{poly}(W)$ dependency no such lower bound is known.

All our data-structures maintain only the distance, 
or in case of reachability, return a boolean answer. 
So another open problem would be to find a data-structure,
that does not just return the distance, but also the actual path.

Finally, our data-structures are randomized Monte-Carlo. 
We wonder if there is a deterministic equivalent.
In case of the previous result by Weimann and Yuster \cite{WeimannY13},
a deterministic equivalent was found by Alon et al.~\cite{AlonCC19}.

\section*{Acknowledgment}

This project has received funding from the European Research Council (ERC) under the European 
Unions Horizon 2020 research and innovation programme under grant agreement No 715672.
This work was mostly done while Thatchaphol Saranurak was at KTH Royal Institute of Technology.

\bibliographystyle{alpha}
\bibliography{bibliography}

\newpage

\appendix

\section{Proof of \Cref{lem:adjointUpdate}}
\label{sec:proof adjointUpdate}

\begin{proof}[Proof of \Cref{lem:adjointUpdate}]
	Via the Sherman-Morrison-Woodbury formula we have
	\begin{align*}(A + UV^\top )^{-1} = A^{-1} - A^{-1} U (\I + V^\top A^{-1} U)^{-1} V^\top A^{-1}
	\end{align*}
	and via the Sylvester determinant identity $\det(\I+AB) = \det(\I + BA)$ we have
	\begin{align*}
	\det(A + UV^\top) =  \det(A) \cdot \det(\I + A^{-1} UV^\top) = \det(A) \cdot \det(\I + V^\top A^{-1} U).
	\end{align*}
	This allows us to write the determinant of $M := \I \cdot \det(A) + V^\top \adj(A) U$ as follows:
	\begin{align*}
	\det(M) &= \det(\I \cdot \det(A) + V^\top \adj(A) U)\\
	&=
	\det(A)^f \cdot \det(\I + V^\top A^{-1} U)\\
	&= \det(A+UV^\top ) \cdot \det(A)^{f-1}
	\end{align*}
	Which yields $\adj(A) \cdot \det(M)\det(A)^{-f} = A^{-1}  \det(A+UV^\top )$ because $\adj(A) = A^{-1} \det(A)$.
	Likewise we obtain:
	\begin{align*}
	&\frac{\left(\adj(A) U\right) \: \adj(M) \: \left(V^\top \adj(A)\right)} {\det(A)^f }\\
	=&
	\frac{\det(M)\det(A)^2\left(A^{-1} U\right) \: (\I \cdot \det(A) + V^\top \adj(A) U)^{-1} \: \left(V^\top A^{-1}\right)} {\det(A)^{f} }\\
	=&
	\frac{\det(M)\det(A)\left(A^{-1} U\right) \: (\I + V^\top A^{-1} U)^{-1} \: \left(V^\top A^{-1}\right)} {\det(A)^{f} }\\
	=&
	\det(A+UV^\top)\left(A^{-1} U\right) \: (\I + V^\top A^{-1} U)^{-1} \: \left(V^\top A^{-1}\right)
	\end{align*}
	Thus we obtain \Cref{lem:adjointUpdate} by multiplying the Sherman-Morrison-Woodbury identity with $\det(A+UV^\top)$.
\end{proof}

\end{document}